\newcommand{\oeisseqnum}[1]{\href{https://oeis.org/#1}{\rm \underline{#1}}}
\renewcommand\ge\geqslant
\renewcommand\le\leqslant
\renewcommand\propto\varpropto
\newcommand{\Hermiteprob}{\mathop{{}\it He}\nolimits}
\newcommand{\normord}[1]{\mathopen{:}\,\mathrel{#1}\,\mathclose{:}}
\newcommand{\HS}{\mathop{{}\it HS}\nolimits}
\newtheorem{theorem}{Theorem}[section]
\newtheorem{corollary}[theorem]{Corollary}
\newtheorem{proposition}[theorem]{Proposition}
\theoremstyle{definition}
\newtheorem{definition}[theorem]{Definition}
\newtheorem{example}[theorem]{Example}
\theoremstyle{remark}
\newtheorem{remarkaftertheorem}{Remark}[theorem]
\newtheorem*{remark*}{Remark}
\numberwithin{equation}{section}
\newcommand{\adag}{{a^\dag}}
\newcommand{\defeq}{\mathrel{:=}}
\newcommand{\eqdef}{\mathrel{=:}}
\newenvironment{sizeequation}[1]{%
  \skip@=\baselineskip
  #1%
  \baselineskip=\skip@
  \equation
}{\endequation \ignorespacesafterend}
\newenvironment{sizealignat}[2]{%
  \skip@=\baselineskip
  #1%
  \baselineskip=\skip@
  \alignat
  #2%
}{\endalignat \ignorespacesafterend}  
\begin{document}


\begin{frontmatter}
\title{Sheffer polynomials and the s-ordering of exponential boson operators}
\author{Robert S. Maier}
\ead{rsm@math.arizona.edu}
\address{Depts.\ of Mathematics and Physics, University of Arizona, Tucson,
AZ 85721, USA}

\begin{abstract}
  The $s$-ordered form of any product of single-mode boson creation
  and annihilation operators, containing only a single annihilator, is
  computed explicitly.  The $s$\nobreakdash-ordering concept
  originated in quantum optics, but subsumes normal, symmetric (Weyl),
  and anti-normal ordering for any two operators satisfying a
  canonical commutation relation.  Because the
  $s$\nobreakdash-ordering map can be viewed as producing a function
  of a complex variable, its inverse is a quantization map that takes
  such ``classical'' functions to quantum operators.  The explicit
  $s$\nobreakdash-ordered expressions are derived with the aid of a
  parametric family of Sheffer polynomial sequences (or~equivalently a
  parametric exponential Riordan array of polynomial coefficients),
  called the Hsu--Shiue family.  To yield orderings interpolating
  between normal and anti-normal, this family must be extended.
\end{abstract}
\end{frontmatter}

\begin{keyword}
Exponential boson operator \sep Heisenberg--Weyl algebra \sep operator ordering \sep s-ordering \sep normal ordering \sep Sheffer sequence \sep Riordan array
\MSC[2020] 81S30 \sep 81S05 \sep 05A40
\end{keyword}

\section{Introduction}
\label{sec:intro}

The Heisenberg--Weyl algebra~$\mathcal{W}$, generated by ladder
operators $a,\adag$ with canonical commutation relation $[a,\adag]=1$,
appears in the modeling of a quantum harmonic oscillator, a single
quantized radiation mode of the electromagnetic field, or in~general,
any quantized, single-mode bosonic field.

Even in a single-mode theory, when calculating quantum-mechanical
expectation values of observables, it is important to be able to
express the corresponding operators in~$\mathcal W$, and operators
in~general (such as the density operator representing the current
state, and also non-hermitian operators), in various ordered ways.  An
example is the exponential operator ${\rm e}^{\lambda (\adag a)}$, an
exponentiated version of the number operator~$\adag a$, which when
$\lambda<0$ appears as the density operator for a thermal state.

In the then new field of quantum optics, Cahill and
Glauber~\cite{Cahill69} derived as identities
in~$\mathcal{W}[[\lambda]]$, the space of formal power series
in~$\lambda$ with coefficients in~$\mathcal{W}$, the representations
\begin{equation}
\label{eq:CH1}
  {\rm e}^{\lambda(\adag a)}
    =
    \left\{
    \begin{aligned}
      &\normord{\exp\left[ \left({\rm e}^\lambda-1\right)\adag a \right]}_N,
      \\
      &\normord{\frac2{1+{\rm e}^\lambda}
        \exp \left [ \frac{ 2({\rm e}^\lambda-1)}{1+{\rm e}^\lambda} \adag a \right]}_W,
      \\
      & \normord{{\rm e}^{-\lambda}\exp\left[ \left(1-{\rm e}^{-\lambda}\right)\adag a \right]}_A.
    \end{aligned}
    \right.
\end{equation}
The unary double dot operations $\normord{\dots}_N$,
$\normord{\dots}_W$, and $\normord{\dots}_A$ are quantization maps
that enforce normal, Weyl (i.e., symmetric), and anti-normal ordering
on each monomial (in~$\lambda$ and $a,\adag$) in the series expansions
of the functions on which they act.  The first of these identities is
the McCoy--Schwinger formula, which was previously known.  Within each
$\normord{\dots}$, the formal symbols $a,\adag$ are taken to commute
and are often written as~$\alpha,\alpha^*$; so each of the three
functions is a ``c\nobreakdash-number'' or ``classical''
representation of the quantum operator ${\rm e}^{\lambda(\adag a)}$.

They also introduced a concept of $s$\nobreakdash-ordering
$\normord{\dots}_s$, which generalizes the three more familiar types:
it involves a free parameter $s\in\mathbb{C}$ (usually, $s\in[-1,1]$),
and when $s=-1,0,+1$, $s$\nobreakdash-ordering reduces to N-,~W-, and
A\nobreakdash-ordering.  They derived the identity
\begin{equation}
\label{eq:CH2}
  {\rm e}^{\lambda(\adag a)}
  =
  \normord{
  \frac{2}{1+{\rm e}^\lambda - s(1-{\rm e}^\lambda)}
        \exp \left [ \frac{ 2({\rm e}^\lambda-1)}{1+{\rm e}^\lambda-s(1-{\rm e}^\lambda)} \adag a \right] }_s,
\end{equation}
which subsumes the preceding three, and has the quantization
\begin{sizeequation}{\small}
\label{eq:invCH2}
  \normord{{\rm e}^{\lambda(\adag a)}}_s =
    \frac2{2-\lambda-s\lambda}
    \left(
    \frac{2+\lambda-s\lambda}{2-\lambda-s\lambda}
    \right)^{\adag a}
    \!= \left[1+\tfrac12(1-s)\lambda\right]^{\adag a}
    \left[1-\tfrac12(1+s)\lambda\right]^{-a\adag}    
\end{sizeequation}
as its inverse.

The $s$\nobreakdash-ordering concept has proved useful in quantum
mechanics on phase space, and specifically in the quasiprobability
approach to quantum optics~\cite{Schleich2001,Leonhardt97} and other
fields of quantum physics~\cite{Sperling2020}.  A~fundamental result
is Sudarshan's optical equivalence theorem, which makes the quantum
theory of light formally resemble a classical probabilistic
theory~\cite{Klauder68}.  Suppose a density operator~$\hat\rho$ for a
state has \emph{anti-normal} ordered representation
$\normord{F^{(A)}(\alpha,\alpha^*)}_A$ (which means that $F^{(A)}$~is
some function on the complex $\alpha$-plane, real due to hermiticity,
and that anti-normal quantization of~$F^{(A)}$ yields~$\hat\rho$).
Then, $F^{(A)}$~can be viewed as a quasiprobability density on
$\mathbb{C}\ni\alpha$, from which quantum-mechanical expectations of
\emph{normal} ordered operators, with respect to the state, can be
computed by integration in a formally classical way.

This $F^{(A)}$, which is obtained (by setting $s=\nobreak+1$) from a
general $s$-parametrized representation of~$\hat\rho$, which is
denoted by $F^{(s)}$, may not be non-negative on the
$\alpha$\nobreakdash-plane; in which case it cannot be interpreted as
a classical probability density, and the state is said to be
non-classical.  But non-classicality can be quantified: as $s$~is
lowered from~$+1$ toward $0$ and~$-1$, regions of negativity (if~any)
of~$F^{(s)}$ and its marginals shrink at some rate and
vanish~\cite{Orlowski93}.

Quantum optics models in which the time evolution of
$s$\nobreakdash-parametrized quasiprobability densities has been
studied include the widely applicable Jaynes--Cummings model of a
two-level atom (a~qubit) interacting with a single light mode of a
cavity~\cite{Daeubler92}, and the Drummond--Walls model of dispersive
optical bistability~\cite{Vogel89}.  Plots of densities evolving with
time, at multiple alternative values of~$s$, reveal much about model
dynamics.

In this paper, we go beyond previous work on the foundations of
$s$\nobreakdash-ordering by finding closed-form, analytic expressions
of $s$\nobreakdash-ordered representations denoted by~$F^{(s)}$ (i.e.,
$s$\nobreakdash-dependent functions on the complex $\alpha$-plane),
for certain single-mode exponential boson operators of the form ${\rm
  e}^{\lambda\omega}$, where $\omega\in\mathcal{W}$.  That is, for
each~$s$ we find an~$F^{(s)}$ for which the
$s$\nobreakdash-quantization $\normord{F^{(s)}}_s$ equals~${\rm
  e}^{\lambda\omega}$.

The case most amenable to an analytic treatment is when $\omega$~is a
monomial or ``boson string'': a~product of $a$'s and~$\adag$'s, i.e.,
a word over~$\{a,\adag\}$.  The ``single annihilator'' word
$\omega=\adag^L a\adag^R$ will be treated, where $e=L+\nobreak
R-\nobreak1\ge0$ is the excess of $\adag$'s over~$a$'s.  The
Cahill--Glauber result, eq.~(\ref{eq:CH2}) above, is the
$s$\nobreakdash-representation of ${\rm e}^{\lambda\omega}$ when
$(L,R)=(1,0)$ and $\omega=\adag a$.

The chief tool is the theory of Sheffer polynomials
\cite{Roman84,Costabile2022}, or equivalently the theory of
exponential Riordan arrays (infinite triangular arrays of polynomial
coefficients) \cite{Shapiro91,Barry2016,Shapiro2022}.  A~Sheffer
polynomial sequence is a sequence of polynomials $s_n(t)$ with
$s_0(t)\equiv1$ and $\deg s_n=n$, $n\ge0$, which can be constructed in
any of several ways, including the use of differential ladder
operators that increment and decrement~$n$.  By extending previous
work of Lang~\cite{Lang2000,Lang2007} and B{\l}asiak and collaborators
\cite{Blasiak2005,Blasiak2006}, we first derive normal and anti-normal
representations that subsume the first and third identities
in~(\ref{eq:CH1}), namely
\begin{align}
&
\begin{aligned}
{\rm e}^{\lambda(\adag^L a \adag^R)}
&=\normord{\mathcal{HS}(\adag a, \lambda\adag^e;\, -e,1,R)}_N\\
&= \normord{(1-e\lambda\adag^e)^{-R/e} \exp\left\{ \left[\bigl(1-e\lambda\adag^e\bigr)^{-1/e}-1\right] \adag a\right\}}_N
\end{aligned}
\label{eq:3}
\\
\shortintertext{and}
&
\begin{aligned}
{\rm e}^{\lambda(\adag^L a \adag^R)}
&=\normord{\mathcal{HS}(a \adag, \lambda\adag^e;\,e,-1,-L)}_A\\
&= \normord{\exp\left\{ - a\adag \left[\bigl(1+e\lambda\adag^e\bigr)^{-1/e}-1\right] \right\}   (1+e\lambda\adag^e)^{-L/e} }_A,
\end{aligned}
\label{eq:4}
\end{align}
where the $e=0$ case is handled by taking the formal $e\to0$ limit.
Here, $\mathcal{HS}(t,z;A,B,r)$ is the exponential generating function
(EGF) of a certain parametric family of Sheffer polynomial sequences,
named after its originators Hsu and Shiue, and parametrized
by~$A,B,r$.  As mentioned, $a,\adag$ inside each $\normord{\dots}$
could be written as~$\alpha,\alpha^*$.  The resulting functions on the
complex $\alpha$\nobreakdash-plane will be real-valued only if $e=0$,
as $\adag^L a\adag^R$ is hermitian only in that case.

Our main result (Theorem~\ref{thm:main}) is the explicit
$s$\nobreakdash-ordering of ${\rm e}^{\lambda(\adag^L a\adag^R)}$,
valid for arbitrary $s\in\mathbb{C}$.  It subsumes eqs.\ (\ref{eq:3})
and~(\ref{eq:4}), which are its $s=\nobreak-1$ and $s=\nobreak+1$
special cases, and includes a new and explicit Weyl ordering formula
($s=\nobreak0$); for the latter, see Theorem~\ref{thm:tired}.  The
general $s$\nobreakdash-ordered representation is expressed in~terms
of what will be called a two-point Hsu--Shiue EGF, interpolating in a
certain sense between the EGF's in (\ref{eq:3}) and~(\ref{eq:4}).

Many examples, including a full treatment of the case $e=1$, are
given.  However, if $e\ge2$ the two-point EGF is obtained by solving a
high-degree polynomial equation (if $e=2$, a~quartic), and may not be
expressible in~terms of radicals.

Any ordering of ${\rm e}^{\lambda(\adag^L a \adag^R)}$ gives an
ordering of $(\adag^L a \adag^R)^n$ by expanding in~$\lambda$.  The
normal or anti-normal ordering of the $n$'th power of a boson string
can be accomplished with the aid of combinatorics~\cite{Mendez2005},
or algorithmically~\cite{Witschel2005}.  In
Proposition~\ref{prop:lastprop} we give a Weyl-ordered representation
of $(\adag a \adag)^n$, derived in a different way that exploits a
known Riordan array.

The paper is structured as follows.  In Sec.~\ref{sec:orderings},
operator orderings including $s$\nobreakdash-ordering are defined,
in part using quantum optics language.  Sheffer polynomial sequences
and exponential Riordan arrays are introduced in
Sec.~\ref{sec:manypolys}, and the parametric Hsu--Shiue family is
defined.  Theorem~\ref{thm:identities34} subsumes eqs.\ (\ref{eq:3})
and~(\ref{eq:4}) above, and serves as a lemma in the derivation by
interpolation of the main result, in Sec.~\ref{sec:main}.  Examples
are given in Sec.~\ref{sec:examples}.

\section{Definitions and $s$-ordering}
\label{sec:orderings}

\subsection{Definitions}

The single-mode Heisenberg--Weyl algebra $\mathcal{W}$ is an
associative algebra over the scalar field~$\mathbb{C}$ generated by
noncommuting indeterminates~$a,\adag$.  Any element
of~$\mathcal{W}$ can be written as a polynomial in~$a,\adag$, i.e.,
a linear combination over~$\mathbb{C}$ of finite products
of~$a,\adag$.  $\mathcal{W}$~has a unit (multiplicative identity),
written~$1$, which is the empty product multiplied by the scalar~$1$.

The representation as polynomials in~$a,\adag$ is not unique, as the
product on~$\mathcal{W}$ satisfies $[a,\adag]=1$.  $\mathcal{W}$ can
be identified with $\mathbb{C}\langle a,\adag\rangle/\allowbreak(a\adag-\nobreak\adag a-\nobreak1)$, the
quotient of the \emph{free} algebra $\mathbb{C}\langle a,\adag\rangle$
by the two-sided ideal generated by $a\adag -\nobreak \nobreak\adag a
- \nobreak1$.  Thus each element of~$\mathcal{W}$ is an equivalence
class of polynomials in~$a,\adag$, though a distinguished
representative can be singled~out by imposing an ordering criterion.

One can also view $\mathcal{W}$ as $\mathbb{C}[\adag][a]$, the ring of
polynomials in~$a$ with coefficients that are polynomials in~$\adag$,
equipped with a product operation that comes from $[a,\adag]=1$.  In
effect, this imposes normal ordering.  Alternatively, one can view
$\mathcal{W}$ as the isomorphic algebra $\mathbb{C}[a][\adag]$, which
imposes anti-normal ordering.  The isomorphism $\mathbb{C}[\adag][a]
\leftrightarrow \mathbb{C}[a][\adag]$ is also based on~$[a,\adag]=1$:
any polynomial can be written in either normal or anti-normal ordered
form.

So, any $G\in\mathcal{W}$ can be viewed as a finite sum
over~$\mathbb{C}$ of terms of the type $\adag^n a^m$, or equivalently
one of terms of the type $a^m \adag^n$.  To avoid the problem of
convergence when dealing with ${\rm e}^G$, which is formally an
infinite sum $\sum_{n=0}^\infty G^n/n!$ and is not an element
of~$\mathcal{W}$, one can embed ${\rm e}^G$ in the one-parameter
family ${\rm e}^{\lambda G}\in\mathcal{W}[[\lambda]]$, a~power series
in an indeterminate~$\lambda$ with coefficients in~$\mathcal{W}$.

It is useful to define an extension algebra
$\overline{\mathcal{W}}{(a)}$ of~$\mathcal{W}$, each element of which
is a sum of $\adag^n a^m$ terms, finite or infinite, but with the
exponent of~$a$ taking only a finite number of values in~all.  (The
product on~$\overline{\mathcal{W}}{(a)}$ comes from $[a,\adag]=1$.)
One can equally well say that each element is a sum of $a^m\adag^n$
terms, finite or infinite, with the same condition imposed.
Similarly, one defines the extension algebra
$\overline{\mathcal{W}}{(\adag)}$ of~$\mathcal{W}$, where it is the
exponent of~$\adag$ that can take only a finite number of values.
$\overline{\mathcal{W}}{(a)}$, $\overline{\mathcal{W}}{(\adag)}$ could
be denoted by $\mathbb{C}[[\adag]][a]$, $\mathbb{C}[[a]][\adag]$.  The
power-series algebras $\mathbb{C}[[\adag]][a][[\lambda]]$,
$\mathbb{C}[[a]][\adag][[\lambda]]$ will also appear.

\subsection{Orderings}

The following defines $s$\nobreakdash-ordering as an interpolation
between normal and anti-normal
ordering~\cite{Cahill69,Schleich2001,Leonhardt97}.  It was introduced
by Cahill--Glauber~\cite{Cahill69} in quantum optics, where $a,\adag$
are ladder (annihilation and creation) operators for a quantized
radiation field. But the treatment here is purely algebraic and does
not single~out a \emph{representation} of~$\mathcal{W}$, such as one
on a Fock space.  So it makes no~mention of coherent states or number
(i.e., Fock) states, and though the Glauber displacement operator
appears as an algebraic object, it will not act on anything.

Let $G\in\mathcal{W}$, i.e., let $G(a,\adag)$ be a polynomial in the
noncommuting $a,\adag$ that represents~$G$ (and is not unique).  One
can define two c\nobreakdash-number or classical representations
(sometimes called transforms) associated to~$G$, denoted by
$F^{(N)},F^{(A)}\in\mathcal{W}_c=\mathbb{C}[\alpha,\alpha^*]$, the
ring of polynomials in \emph{commuting} indeterminates
$\alpha,\alpha^*$, which is
$\mathbb{C}\langle\alpha,\alpha^*\rangle/\allowbreak (\alpha \alpha^*
- \nobreak \alpha^*\alpha)$.  They are defined by requiring that
\begin{subequations}
\begin{align}
  G(a,\adag) &= \normord{F^{(N)}(\alpha,\alpha^*)}_N,\\
  G(a,\adag) &= \normord{F^{(A)}(\alpha,\alpha^*)}_A,
\end{align}
\end{subequations}
as equalities in~$\mathcal{W}$.  In this, the normal and anti-normal
double dot operations $\normord{\dots}_N$ and $\normord{\dots}_A$,
each linear over~$\mathbb{C}$ in its single argument, map
$\mathcal{W}_c$ to~$W$ by quantizing any monomial ${\alpha^*}^n
\alpha^m$ in~$\mathcal{W}_c$ to the elements $\adag^n a^m$ and
$a^m\adag^n$ of~$\mathcal{W}$, respectively.  Thus the inverse images
$F^{(N)},F^{(A)}\in\mathcal{W}_c$ are standardized, ordered
representations of $G\in \mathcal{W}$.

Within any ordering $\normord{\dots}$ one often writes $a,\adag$
instead of $\alpha,\alpha^*$, as was done in the Introduction, there
being no~ambiguity: by convention, within any $\normord{\dots}$ the
indeterminates $a,\adag$ commute, so that writing them as
$\alpha,\alpha^*$ is optional.  Note that the computation of
$F^{(N)},F^{(A)}$ extends from $G\in\mathcal{W}$ to
$G\in\mathcal{W}[[\lambda]]$, in which case
$F^{(N)},F^{(A)}\in\mathcal{W}_c[[\lambda]]$.

In the same way, one defines a standardized Weyl-ordered
representation $F^{(W)}\in\mathcal{W}_c$ of any $G\in\mathcal{W}$ by
requiring that $G(a,\adag)$ equal
$\normord{F^{(W)}(\alpha,\alpha^*)}_W$.  Here, the Weyl quantization
$\normord{\dots}_W$ maps any monomial ${\alpha^*}^n \alpha^m$ to the
\emph{average} of the $\binom{n+m}n = \binom{n+m}m$ distinct orderings
of a product of $n$ $\adag$'s and $m$~$a$'s.  (An equivalent
definition is that in~$\mathcal{W}$,
$\normord{(\mu\alpha+\nobreak\nu\alpha^*)^p}_W$ must equal $(\mu
a+\nobreak \nu \adag)^p$.)  For instance,
\begin{equation}
\normord{{\alpha^*}^2 \alpha}_W = (\adag^2 a + \adag a \adag + a\adag^2)/3,
\end{equation}
where the right-hand side, an element of~$\mathcal{W}$, can be
rewritten with the aid of $[a,\adag]=1$ in various ways, including
$\adag^2 a+\nobreak\adag$, $\adag a \adag$, $a\adag^2-\nobreak \adag$.
So
\begin{equation}
\label{eq:reds}
G(a,\adag)=\adag a\adag \quad\Longrightarrow\quad 
\left\{
\begin{aligned}
F^{(N)}(\alpha,\alpha^*) &= {\alpha^*}^2 \alpha + \alpha^*,\\
F^{(W)}(\alpha,\alpha^*) &= {\alpha^*}^2 \alpha,\\
F^{(A)}(\alpha,\alpha^*) &= {\alpha^*}^2 \alpha -\alpha^*,
\end{aligned}
\right.
\end{equation}
and if $G(a,\adag)=\adag^2 a$, then $\alpha^*$ would be subtracted
from each $F(\alpha,\alpha^*)$, etc.

It is useful to define a formal power series
$D(\beta,\beta^*)\in\mathcal{W}[[\beta,\beta^*]]$ by
\begin{equation}
\begin{aligned}
  D(\beta,\beta^*) &= \exp(\beta\adag-\beta^* a) = \sum_{p=0}^\infty \frac1{p!}(\beta\adag-\beta^* a)^p \\
  &= \sum_{n,m=0}^\infty \frac{\beta^n (-\beta^*)^m}{n!\,m!}\, \normord{{\alpha^*}^n\alpha^m}_W,
\end{aligned}
\end{equation}
so that
\begin{equation}
  \normord{{\alpha^*}^n \alpha^m}_W =
  \frac{\partial^{n+m}}{\partial\beta^n\partial(-\beta^*)^m}
  D(\beta,\beta^*)\Bigm|_{(\beta,\beta^*)=(0,0)}.
\end{equation}
In quantum
optics, $D(\beta,\beta^*)$ appears as the displacement operator.

In such equations as the preceding, a notational convention common in
quantum optics is adopted: paired quantities such as $\alpha,\alpha^*$
and $\beta,\beta^*$, which in some contexts are treated as a complex
number and its conjugate, are treated in others as being independent.
(Not least, in the exponentiation of~$\mathcal{W}$ to the complexified
Heisenberg--Weyl group~\cite{Wunsche91}.)  For greater rigor, one
could instead introduce commuting quantities $\alpha_r,\alpha_i$ and
$\beta_r,\beta_i$, the real and imaginary parts of~$\alpha$
and~$\beta$, to serve as real coordinates on the complex $\alpha$- and
$\beta$\nobreakdash-planes~\cite{Schleich2001}.

Define an $s$\nobreakdash-parametrized deformation of the exponential
operator $D(\beta,\beta^*)$, namely $D(\beta,\beta^*;s)= {\rm
  e}^{s\beta(-\beta^*)/2}D(\beta,\beta^*)$, where $s\in\mathbb{C}$;
and for each $n,m$, define the $s$\nobreakdash-quantization
$\normord{{\alpha^*}^n\alpha^m}_s$ of ${\alpha^*}^n \alpha^m$ by
\begin{equation}
  D(\beta,\beta^*;s) = \sum_{n,m=0}^\infty \frac{\beta^n (-{\beta^*})^m}{n!\,m!}\, 
\normord{{\alpha^*}^n\alpha^m}_s,
\end{equation}
so that
\begin{equation}
\label{eq:findsthat}
  \normord{{\alpha^*}^n \alpha^m}_s = \frac{\partial^{n+m}}{\partial\beta^n\partial(-\beta^*)^m} D(\beta,\beta^*;s)\Bigm|_{(\beta,\beta^*)=(0,0)}.
\end{equation}
Equivalently, one requires that
\begin{equation}
  D(\beta,\beta^*) = \exp(\beta\adag-\beta^* a) = {\rm
    e}^{s|\beta|^2/2}\normord{{\rm e}^{\beta \alpha^*-\beta^* \alpha}}_s,
\end{equation}
as an equality in~$\mathcal{W}[[\beta,\beta^*]]$.

As an example of~(\ref{eq:findsthat}), one finds with the aid of
$[a,\adag]=1$ that 
\begin{equation}
\normord{{\alpha^*}^2 \alpha}_s =\adag a\adag + s\,\adag.
\end{equation}
The $s$\nobreakdash-quantization operator $\normord{\dots}_s$ extends
by linearity from monomials in the (commuting)
indeterminates~$\alpha,\alpha^*$, to polynomials in same, i.e., to
elements of ${\mathcal{W}}_c=\mathbb{C}[\alpha,\alpha^*]$, and to
elements of $\mathcal{W}_c[[\lambda]] =
\mathbb{C}[\alpha,\alpha^*][[\lambda]]$, as well.

Thus, one can define a unique $s$\nobreakdash-ordered representation
or $s$\nobreakdash-transform of any $G=G(a,\adag)$ in~$\mathcal{W}$,
denoted by $F^{(s)}(\alpha,\alpha^*)$, by requiring that
\begin{gather}
\label{eq:predual}
  G(a,\adag) = \normord{F^{(s)}(\alpha,\alpha^*)}_s.
  \\
  \shortintertext{So, for example,}
  \label{eq:foo0}
  G(a,\adag)=\adag a\adag\quad\Longrightarrow\quad
  F^{(s)}(\alpha,\alpha^*) = {\alpha^*}^2 \alpha - s\,\alpha^* .
\end{gather}
When $s=\nobreak0$, $F^{(s)}$ reduces to the Weyl-ordered representation
$F^{(W)}$.  It is also not difficult to show, using the
Baker--Campbell--Hausdorff formula
\begin{equation}
  {\rm e}^{A}   {\rm e}^{B} =   {\rm e}^{A+B+[A,B]/2},
\end{equation}
which holds as an equality between power series if $[A,B]$ commutes
with $A$ and~$B$, that when $s=\nobreak-1$ and $s=\nobreak+1$,
$F^{(s)}$ reduces respectively to the normal and anti-normal ordered
representations, $F^{(N)}$ and~$F^{(A)}$.  The $s=-1,0,+1$ cases
of~(\ref{eq:foo0}) are reflected in eq.~(\ref{eq:reds}).

In modern quantum optics, the most commonly encountered
c\nobreakdash-number representations of any element of the
Heisenberg--Weyl algebra are the classical functions
$F^{(-1)}=F^{(N)}$, $F^{(0)}=F^{(W)}$, $F^{(+1)}=F^{(A)}$
on~$\mathbb{C}$ that arise from $G=\hat\rho$, where $\hat\rho$~is a
specified (single-mode) density operator that is an element
of~$\mathcal{W}$, or an extension of~$\mathcal{W}$ that includes
formal power series in~$a,\adag$ \cite{Schleich2001,Lee95,Dariano97}.
These are called the $Q$, $W$, and~$P$ quasiprobability densities, and
are attributed respectively to Husimi, Wigner, and Glauber--Sudarshan.
But in older works on phase-space methods in quantum optics, operators
other than density operators and observables were frequently
``classicized,'' and it was in that general context that
$s$\nobreakdash-ordering was introduced.  (See
\cite{Mehta68,Agarwal70,Mehta77}, and more
recently~\cite{Fan2011}.\footnote{There are two conventions on the
sign of~$s$.  In the original paper of Cahill--Glauber~\cite{Cahill69}
and a few more recent ones \cite{Fan2011,Fan2010,Shahandeh2012},
$s=+1,-1$ correspond to normal and anti-normal ordering; e.g., for a
density operator, to $Q$ and~$P$.  But usually, they correspond to
anti-normal and normal ordering (e.g., to $P$ and~$Q$).  See, e.g.,
\cite[Chapter~12]{Schleich2001}, \cite{Orlowski93,MoyaCessa93}, and
\cite[Chapter~3]{Leonhardt97}.  The latter convention has been adopted
here, so the sign of~$s$ in eq.~(\ref{eq:CH2}), taken
from~\cite{Cahill69}, has been changed.  Note also that in some works
(e.g., \cite{Lee95} and \cite[Table~V]{Agarwal70}), the meanings of
the so-called reciprocal orderings $F^{(N)}$ and~$F^{(A)}$, related by
negation of~$s$, are interchanged.})

Differentiating the exponential operator $D(\beta,\beta^*;s)={\rm
  e}^{(s'-s)|\beta|^2/2} D(\beta,\beta^*;s')$ with respect to
$\beta,\beta^*$, in a procedure that can be traced to
Wilcox~\cite{Wilcox67}, and using~(\ref{eq:findsthat}), yields the
$s$\nobreakdash-ordering conversion formula
\begin{equation}
\label{eq:genconv}
  \normord{{\alpha^*}^n\alpha^m}_s = \sum_{k=0}^{\min(m,n)} k!\,\binom{n}k\binom{m}k
  \left(\frac{s-s'}2\right)^k
  \normord{{\alpha^*}^{n-k}\alpha^{m-k}}_{s'}.
\end{equation}
The $n=m=1$ case of~(\ref{eq:genconv}) is the contraction
\begin{equation}
  \normord{\alpha^*\alpha}_s -   \normord{\alpha^*\alpha}_{s'}  = \frac{s-s'}2,
\end{equation}
from which the general formula (\ref{eq:genconv}) can be derived, much
as Wick's theorem is derived~\cite{Shahandeh2012}.  As another
example, the $s=\nobreak+1$, $s'=-1$ case of~(\ref{eq:genconv}) is
\begin{equation}
a^m  \adag^n = \sum_{k=0}^{\min(m,n)} k!\,\binom{n}k\binom{m}k
    \adag^{n-k}a^{m-k},
\end{equation}
an identity in $\mathcal{W}$ that converts an anti-normal ordering
($s=\nobreak+1$) to a normal one ($s=\nobreak-1$).  Thus, one can
convert between the coefficients of an anti-normal and a normal
series, $\sum_{m,n} f_{m,n}^{(A)} a^m \adag^n$ and $\sum_{m,n}
f_{n,m}^{(N)} \adag^n a^m$ \cite{Shalitin79}.

Equation~(\ref{eq:genconv}) can be written in~terms of a two-variable
Hermite polynomial~\cite{Fan2010,Shahandeh2012}, as was recognized
early (when $s\in\{0,\pm1\}$) by Agarwal and Wolf
\cite[Table~V]{Agarwal70}.  It can also be written as
\begin{equation}
\label{eq:predual2}
\begin{aligned}
  \normord{{\alpha^*}^n \alpha^m}_s 
  &= \sum_{k=0}^{\min(m,n)} \frac1{k!}
  \left(\frac{s-s'}2\right)^k
  \normord{ \left( \frac{\partial ^2}{\partial \alpha \partial \alpha^* }   \right)^k
    {{\alpha^*}^n \alpha^m}}_{s'} \\
    &=
    \normord{ \exp\left[
        \left(\frac{s-s'}2
        \right)
        \frac{\partial^2}{\partial \alpha \partial \alpha^*}
        \right]
        {{\alpha^*}^n \alpha^m}}_{s'}.
\end{aligned}
\end{equation}
This extends by linearity to
\begin{equation}
  \normord{F}_s =
  \normord{ \exp\left[
        \left(\frac{s-s'}2
        \right)
        \frac{\partial^2}{\partial \alpha \partial \alpha^*}
        \right]
        F}_{s'}
\end{equation}
for any $F\in\mathcal{W}_c$, i.e., for any polynomial $F=F(\alpha,\alpha^*)$.

Suppose now that
$F^{(s)},F^{(s')}\in{\mathcal{W}}_c=\mathbb{C}[\alpha,\alpha^*]$ are
the $s$-ordered and $s'$-ordered ``c\nobreakdash-number'' versions of
some $G\in\mathcal{W}$, computed from~$G$ by the
requirement~(\ref{eq:predual}).  By~(\ref{eq:predual2}), they must be
related to each other by the dual relation
\begin{equation}
\label{eq:finitesum}
  F^{(s)}(\alpha,\alpha^*) = 
\exp\left[
        \left(\frac{s'-s}2
        \right)
        \frac{\partial^2}{\partial \alpha \partial \alpha^*}
        \right]
F^{(s')}(\alpha,\alpha^*),
\end{equation}
in which the exponentiated second-order differential operator has been
inverted.  In the formally infinite series in~(\ref{eq:finitesum})
obtained by expanding the exponential in a power series, only a finite
number of terms will be nonzero.  Equivalently, $F=F^{(s)}(\alpha,\alpha^*)$
satisfies the diffusion or heat equation
\begin{equation}
  \label{eq:unexponentiated}
  \frac{\partial F}{\partial (-s)} = \frac12\, \frac{\partial^2}{\partial\alpha\partial\alpha^*}F,
\end{equation}
an unexponentiated version of~(\ref{eq:finitesum}).  (See
\cite[Chapter~12]{Schleich2001}.)

Equation~(\ref{eq:finitesum}) can be extended to the case when
$F^{(s)},F^{(s')}$ are convergent power series in~$\alpha,\alpha^*$
rather than polynomials, at~least formally.  The nicest case is when
each of $F^{(s)},F^{(s')}$, regarded as a Maclaurin expansion at
$(\alpha,\alpha^*)=(0,0)$, i.e., $(\alpha_r,\alpha_i)=(0,0)$,
converges globally and defines an analytic function of the real
variables~$\alpha_r,\alpha_i$, or even a complex analytic function of
$\alpha= \alpha_r+\nobreak{\rm i}\alpha_i$.  Because
$\partial^2/\partial\alpha\partial\alpha^*$ equals $\Delta/4$, where
$\Delta=\partial^2/\partial\alpha_r^2 + \partial^2/\partial\alpha_i^2$
is the Laplacian on the complex $\alpha$\nobreakdash-plane coordinatized
by~$\alpha_r,\alpha_i$, one has
\begin{equation}
\begin{aligned}
\label{eq:smoothing}
  F^{(s)}(\alpha,\alpha^*) &= {\rm e}^{(s'-s)\Delta/8}F^{(s')}(\alpha,\alpha^*)
  \\
  &\propto \int F^{(s')}(\beta,\beta^*)\, {\rm e}^{-2|\beta-\alpha|^2/(s'-s)}\,{\rm d}^2\beta.
\end{aligned}
\end{equation}
So in this case, when $s<s'$ it may be possible to compute $F^{(s)}$
from $F^{(s')}$ by a smoothing \emph{Weierstrass transform} operation
on the complex $\alpha$\nobreakdash-plane: convolution with a Gaussian
kernel of variance $(s'-\nobreak s)/4$.

This operation is familiar from quantum
optics~\cite{Schleich2001,Leonhardt97,Sperling2020,Wunsche91}: among
the quasiprobability densities~$F^{(s)}$ obtained when $G=G(a,\adag)$
is a density operator~$\hat\rho$, the Wigner $W$~function
($s=\nobreak0$) is a smoothed version of the Glauber--Sudarshan
$P$~function ($s=\nobreak+1$), and the Husimi $Q$~function
($s=\nobreak-1$) is a smoothed version of the $W$~function.  The
smoothing of~$P$ to $W$ and~$Q$ has been extended from functions of
$\alpha\in\mathbb{C}$ to tempered and other distributions on the
complex $\alpha$\nobreakdash-plane, and the variation with~$s$ of the
mathematical properties and physical interpretation of smoothed
quasiprobability densities has been
explored~\cite{Schleich2001,Orlowski93,Wunsche91}.

In the proof of the main result of this paper
(Theorem~\ref{thm:main}), the unexponentiated
form~(\ref{eq:unexponentiated}), and not the exponentiated forms
(\ref{eq:finitesum}),(\ref{eq:smoothing}), will be employed to relate
$F^{(s)}$, $F^{(s')}\in \mathcal{W}_c[[\lambda]]$, to obviate 
consideration of series convergence.

\section{Riordan, Sheffer, and Hsu--Shiue polynomials}
\label{sec:manypolys}

The following definitions and facts are standard.  For the theory of
(exponential) Riordan arrays and polynomials, introduced under that
name by Shapiro et~al.\ to derive combinatorial identities, see
\cite{Shapiro91,Barry2016,Shapiro2022}.  For the equivalent theory of
Sheffer polynomial sequences (originally called Sheffer of $A$-type
zero), which are used in approximation theory and elsewhere, see
\cite{Roman84,Costabile2022}.  In their present form, both emerged
from the work of G.-C. Rota and collaborators on the refashioning of
the classical umbral calculus as a finite operator
calculus~\cite{Rota75,DiBucchianico95}.

Hsu and Shiue~\cite{Hsu98} defined a parametric family of generalized
Stirling numbers, which now have their own literature
\cite[Chapter~4]{Mansour2016}.  For any choice of parameters, they can
be viewed as the elements of a exponential Riordan array, the ``row
polynomials'' of which are Sheffer polynomials that generalize the
classical Touchard polynomials, which have Stirling numbers as
coefficients.

\subsection{Exponential Riordan arrays and polynomials}

Let $d,h\in\mathbb{C}[[z]]$ be formal power series in~$z$ of order $0$
and~$1$ respectively, meaning that their first nonzero terms are the
$z^0$ and $z^1$ terms; so $1/d$ and the compositional inverse~$\bar h$
both exist as formal power series.  (For instance, $d,h$ may be
Maclaurin expansions of functions $d(z),h(z)$ analytic at~$z=0$.)  One
usually requires that $d$~be monic, $d=1\cdot z^0+\nobreak O(z^1)$,
and often that $h=1\cdot z^1+\nobreak O(z^2)$.  Then, a polynomial
sequence $s_n(t)$, $n\ge0$, where $\deg s_n=n$, with EGF
$\mathcal{S}(t,z)= d(z){\rm e}^{th(z)}$ that is an element of
$\mathbb{C}[t][[z]]$, is defined by
\begin{equation}
  \sum_{n=0}^\infty s_n(t)\frac{z^n}{n!} = \mathcal{S}(t,z),
\end{equation}
with $s_0(t)\equiv1$.  The sequence $s_n(t)$ is said to be exponential
Riordan for the ordered pair~$[d,h]$, and will be denoted briefly
by~$\mathbf{R}[d,h]$.  Such notations as $s_n(t)=\mathbf{R}[d,h]_n(t)$
and $\mathcal{S}(t,z)=\mathbf{R}[d,h](t,z)$ are also used.

Equivalently, one can define a triangular array of
coefficients $s_{n,k}$, $0\le k\le n$, by
\begin{equation}
  \sum_{n=0}^\infty \sum_{k=0}^n s_{n,k}t^k\frac{z^n}{n!} = \mathcal{S}(t,z),
\end{equation}
so that
\begin{equation}
\label{eq:expRdef}
  s_{n,k} = \frac{n!}{k!}\, [z^n] \left[d(z)h(z)^k\right],
\end{equation}
where $[z^n]$ extracts the coefficient of~$z^n$ from the element of
$\mathbb{C}[[z]]$ that follows.  Then $s_n(t)=\sum_{k=0}^n
s_{n,k}t^k$, so the $s_n(t)$ are the row polynomials of the array.
The array $s_{n,k}$ can be viewed as the lower-triangular part of an
infinite matrix, and is called an exponential Riordan array.  The
notation $s_{n,k}=\mathbf{R}[d,h]_{n,k}$ will also be used, with
$\mathbf{R}[d,h]$ alternatively signifying the matrix as a whole.

Row-finite lower-triangular matrices form a group under matrix
multiplication, and exponential Riordan arrays turn~out to form a
subgroup, which will be denoted by~$\mathfrak{R}$.  The following is a
fundamental fact.

\begin{proposition}[\cite{Shapiro2022}]
In the exponential Riordan group\/ $\mathfrak{R}$ the product operation
is\/ $\mathbf{R}[d_1,h_1]\,\mathbf{R}[d_2,h_2] = \allowbreak
\mathbf{R}[(d_2\circ\nobreak h_1)d_1, \allowbreak h_2\circ\nobreak
  h_1]$, and the inversion operation is\/ $\mathbf{R}[d,h]^{-1}
=\allowbreak \mathbf{R}[1/(d\circ\nobreak \bar h), \bar h]$, where
$\bar h$~is the compositional inverse of~$h$, i.e., $\bar
h(h(z))=\nobreak z$.  The multiplicative identity is\/ $[1,z]$, which
signifies the infinite identity matrix.
\end{proposition}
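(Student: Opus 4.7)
The plan is to establish the product rule first, then read off the identity and inverse as formal consequences. I would work at the level of exponential generating functions rather than matrix entries, since $\mathbf{R}[d,h]$ is characterized by the closed-form EGF $\mathcal{S}(t,z) = d(z)\,e^{t h(z)}$, and the product rule admits a clean derivation at that level.

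First I would compute the $(n,j)$-entry of the matrix product $AB$, where $A = \mathbf{R}[d_1,h_1]$ and $B = \mathbf{R}[d_2,h_2]$ have entries $a_{n,k} = (n!/k!)\,[z^n]\,d_1(z)\,h_1(z)^k$ and $b_{k,j} = (k!/j!)\,[y^k]\,d_2(y)\,h_2(y)^j$. Assembling the generating function $\sum_{n,j} (AB)_{n,j}\, t^j z^n/n!$ and summing on $j$ first collapses the $j$-sum into the exponential $e^{t h_2(y)}$, yielding
\begin{equation*}
\sum_{n,k} a_{n,k}\,\frac{z^n}{n!} \cdot k!\,[y^k]\bigl(d_2(y)\,e^{t h_2(y)}\bigr)
= d_1(z) \sum_k h_1(z)^k\,[y^k]\bigl(d_2(y)\,e^{t h_2(y)}\bigr).
\end{equation*}
The key step is the formal power series substitution $y \mapsto h_1(z)$: because $h_1$ has order $1$, positive powers of $h_1(z)$ can be substituted termwise into any element of $\mathbb{C}[t][[y]]$, and the identity $\sum_k c_k\,h_1(z)^k = f(h_1(z))$ holds whenever $c_k = [y^k]\,f(y)$. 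This collapses the inner sum to $d_2(h_1(z))\,e^{t\,h_2(h_1(z))}$, producing
\begin{equation*}
\sum_{n,j} (AB)_{n,j}\, \frac{t^j z^n}{n!} = [(d_2 \circ h_1)\,d_1](z)\; e^{t\,(h_2 \circ h_1)(z)},
\end{equation*}
which is precisely $\mathbf{R}[(d_2 \circ h_1)\,d_1,\, h_2 \circ h_1](t,z)$. Since $h_2 \circ h_1$ has order $1$ and $(d_2 \circ h_1)\,d_1$ has order $0$, the product is again a bona fide exponential Riordan array.

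Next I would verify directly that $[1,z]$ is the identity: its EGF $e^{tz}$ has matrix entries $\delta_{n,k}$, and specializing the product rule with $(d_2,h_2)=(1,z)$ on the right and with $(d_1,h_1)=(1,z)$ on the left confirms that $[1,z]$ acts as a two-sided identity. For the inverse, I would solve $\mathbf{R}[d,h]\,\mathbf{R}[d',h'] = \mathbf{R}[1,z]$ using the product rule just proved: matching the second slot forces $h' \circ h = z$, so $h' = \bar h$, and matching the first slot forces $(d' \circ h)\,d = 1$, so $d' = 1/(d \circ \bar h)$. A second application of the product rule verifies that this same candidate also gives $\mathbf{R}[d',h']\,\mathbf{R}[d,h] = \mathbf{R}[1,z]$, confirming it as a two-sided inverse.

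The main subtlety throughout is the rigor of the substitution step: everything hinges on $h_1$ having zero constant term, which ensures that $\sum_k h_1(z)^k\,[y^k]f(y)$ is well-defined in $\mathbb{C}[[z]]$ and coincides with $f(h_1(z))$. Once that is cleanly set up, the proof reduces to bookkeeping with coefficient extraction, and the closure statements (that the product and inverse are again exponential Riordan arrays, so $\mathfrak{R}$ is indeed a subgroup of the row-finite lower-triangular matrices) fall out as byproducts of the formulas.
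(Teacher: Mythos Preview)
Your argument is correct and carefully done: the EGF computation, the substitution step (justified by $h_1$ having order~$1$), and the deduction of identity and inverse all go through as written. Note, however, that the paper does not supply its own proof of this proposition; it is stated as a standard fact and attributed to the Riordan-array literature (Shapiro et~al.), so there is nothing in the paper to compare your approach against.
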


For matrix inverses, the notation
\begin{equation}
\mathbf{R}[d,h]^{-1} \eqdef \mathbf{R}\overline{[d,h]} = \mathbf{R}[\bar d,\bar h] ,
\end{equation}
where $\bar d=1/(d\circ \bar h)$, can be used.  Note that $\bar d,\bar
h$, like~$d,h$, are formal power series of orders $0$ and~$1$.

A quantum-optics example of inversion in~$\mathfrak{R}$ is afforded by
the Cahill--Glauber identity (\ref{eq:CH2}) and its
inverse~(\ref{eq:invCH2}): expansion in the parameter~$\lambda$
reveals that they are based respectively on the arrays
\begin{sizeequation}{\small}
  \mathbf{R}\left[
\frac2{1+{\rm e}^z - s(1-{\rm e}^z)}, \frac{2({\rm e}^z-1)}{1+{\rm e}^z - s(1-{\rm e}^z)}\right], 
  \quad
  \mathbf{R}
  \left[
    \frac{2}{2-z-sz}, \ln\left(\frac{2+z-sz}{2-z-sz}\right)
    \right]
\end{sizeequation}
in~$\mathfrak{R}$, which are inverses of each other.

The generating functions of the Riordan-array theory facilitate the
computation of matrix--vector products, as follows.

\begin{proposition}[\cite{Shapiro2022}]
\label{prop:mv}
For any column vector $u=(u_k)$ with
EGF $u(t)=\sum_{k=0}^\infty u_kt^k/k!$, an element
of\/ $\mathbb{C}[[t]]$, and for any exponential Riordan array $s_{n,k}$
equaling\/ $\mathbf{R}[d,h]_{n,k}$, define the column vector $v=(v_n)$ by
\begin{equation}
  v_n = \sum_{k=0}^n s_{n,k} u_k,
\end{equation}
or symbolically by\/ $v=\mathbf{R}[d,h]u$.  Then, its EGF
$v(z)=\sum_{n=0}^\infty v_nz^n/n!$, an element of~\/$\mathbb{C}[[z]]$,
equals $d(z) u(h(z)) = [d\cdot (u\circ h)](z)$, where\/
$\cdot$~and\/~$\circ$ are the binary operations of product and
composition on formal power series.
\end{proposition}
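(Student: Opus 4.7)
The plan is a direct generating-function manipulation, using the definition of an exponential Riordan array in eq.~(\ref{eq:expRdef}). First I would substitute the formula $v_n = \sum_{k=0}^n s_{n,k} u_k$ into the EGF of~$v$ and interchange the order of summation (this is legitimate as an identity in~$\mathbb{C}[[z]]$ because for each fixed $z^n$ only finitely many pairs $(n,k)$ contribute):
\begin{equation*}
v(z) = \sum_{n=0}^\infty \sum_{k=0}^n s_{n,k} u_k \frac{z^n}{n!}
     = \sum_{k=0}^\infty u_k \sum_{n=k}^\infty s_{n,k}\frac{z^n}{n!}.
\end{equation*}

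Next I would evaluate the inner sum by appealing to the defining identity $s_{n,k} = (n!/k!)\,[z^n]\!\left[d(z)h(z)^k\right]$. Plugging this in, the factorials cancel and the inner sum becomes
\begin{equation*}
\sum_{n=k}^\infty s_{n,k}\frac{z^n}{n!}
 = \frac{1}{k!}\sum_{n=k}^\infty \bigl([z^n]\,d(z)h(z)^k\bigr)\,z^n
 = \frac{d(z)\,h(z)^k}{k!},
\end{equation*}
where the lower limit $n=k$ is no restriction because $h$ has order~$1$, so $d(z)h(z)^k$ has order at~least~$k$ in~$\mathbb{C}[[z]]$.

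Substituting back, one obtains
\begin{equation*}
v(z) = d(z) \sum_{k=0}^\infty u_k \frac{h(z)^k}{k!} = d(z)\, u(h(z)),
\end{equation*}
which is the claimed identity. The composition $u\circ h$ makes sense in~$\mathbb{C}[[z]]$ because $h$ has zero constant term, so no convergence question arises; the whole computation is an equality of formal power series.

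There is no real obstacle here — the only point requiring care is the justification that all manipulations are valid in the formal-power-series ring, and this is handled by the order-$0$ and order-$1$ hypotheses on $d$ and~$h$ imposed in the setup.
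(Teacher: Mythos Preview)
Your proof is correct and is exactly the standard direct computation; the paper itself does not supply a proof of this proposition, treating it as a known result cited from~\cite{Shapiro2022}, so there is nothing to compare against.
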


\subsection{Sheffer polynomials}
\label{sec:Sheffer}

The theories of Sheffer and exponential Riordan polynomials are
equivalent, being related by the inversion operation
\cite{He2007,Wang2008}.  A~polynomial sequence $s_n(t)$, $n\ge0$,
where $\deg s_n=n$ and usually $s_0(t)\equiv1$, is said to be Sheffer
for the ordered pair $[g,f]$ if and only if it is exponential Riordan
for the pair $[d,h]$ equal to $\overline{[g,f]}$, the inverse of
$[g,f]$ in the group~$\mathfrak{R}$, which will be written as~$[\bar
  g,\bar f]$.  The sequence as a whole will be denoted briefly by
$\mathbf{S}[g,f]$, or equivalently by $\mathbf{R}[d,h] =
\mathbf{R}[\bar g,\bar f]$.

It is convenient in the Sheffer theory to use the indeterminate~$D$ in
the series $g,f$, and $z$ in the series~$\bar g,\bar f$.  That is, to
reduce ambiguity one adopts the convention that
$g,f\in\mathbb{C}[[D]]$ and $\bar g,\bar f\in\mathbb{C}[[z]]$.  This
makes it possible to write both $z=f(D)$ and $D=\bar f (z)$, related
by series reversion.

So a polynomial sequence~$s_n$ is Sheffer for~$[g,f]$, where
$g,f\in\mathbb{C}[[D]]$ are of order $0$ and~$1$ respectively, and
$\bar g,\bar f\in\mathbb{C}[[z]]$ are likewise, if
\begin{equation}
\label{eq:appearingin}
  \sum_{n=0}^\infty s_n(t)\frac{z^n}{n!} = \bar g(z){\rm e}^{t\bar
    f(z)}= \frac{1}{g(\bar f(z))} {\rm e}^{t\bar f(z)},
\end{equation}
and one writes $s_n=\mathbf{S}[g,f]_n=\mathbf{R}[\bar g,\bar f]_n$ and
$s_{n,k}=\mathbf{S}[g,f]_{n,k} = \mathbf{R}[\bar g,\bar f]_{n,k}$, etc.

Many well-known polynomial sequences can be viewed as Sheffer
sequences \cite{Roman84,Kim2013a}, and there is some overlap with the
orthogonal polynomials of theoretical physics.  An example is the
sequence of Hermite polynomials $\Hermiteprob_n(t)$, which is
$\mathbf{S}[{\rm e}^{D^2/2},D]$ or equivalently $\mathbf{R}[{\rm
    e}^{-z^2/2},z]$.  The $\Hermiteprob_n(t)$ are orthogonal with
respect to an appropriate measure (namely, ${\rm e}^{-t^2/2}\,{\rm
  d}t$), as they happen to satisfy a three-term recurrence relation.
The squared Hermite polynomials $[\Hermiteprob_n(\sqrt{t})]^2$
turn~out to be $\mathbf{S}[\sqrt{1-\nobreak 2D}/\allowbreak(1-\nobreak
  D), \allowbreak D/(1-\nobreak D)]$ or $\mathbf{R}[1/\sqrt{1-\nobreak
    z^2}, \allowbreak z/\allowbreak(1+\nobreak z)]$, but do not
satisfy such a relation.

For a Sheffer sequence $\mathbf{S}[g,f]$, the formal power series
$z=f(D)$ may be the Maclaurin expansion of an elementary function
of~$D$, without the compositional inverse $D=\bar f(z)$ appearing in
the EGF~(\ref{eq:appearingin}) having the same property.  An example
is the sequence of Abel polynomials $A_n(t)=t(t-\nobreak n)^{n-1}$,
which turns~out to be ${\bf S}[1,D{\rm e}^D]$.  (See~\cite{Roman84}.)
But the compositional inverse of $z=f(D)=D{\rm e}^D$ is the Lambert
$W$~function $D=\bar f(z)=\allowbreak W_L(z)=\allowbreak z-\nobreak
z^2 +\nobreak \frac32z^3+\nobreak \cdots$, which is not an elementary
function. It is more convenient to refer to the Abel sequence as ${\bf
  S}[1,D{\rm e}^D]$ than as ${\bf R}[1,W_L(z)]$.  The same phenomenon
will appear in Sec.~\ref{sec:main}.

The following reveals the deep connection between the Sheffer theory
and the Heisenberg--Weyl algebra.  It also clarifies why the seemingly
arbitrary symbol~`$D$' is used here as a power-series variable.

\begin{proposition}[\cite{Roman84,Blasiak2006,VerdeStar2025}]
  For any pair $g,f\in\mathbb{C}[[D]]$ of orders $0,1$ that defines a
  Sheffer sequence $s_n=\mathbf{S}[g,f]_n$ of polynomials in~$t$, the
  Heisenberg--Weyl algebra generated by $P, M$ with\/ $[P,M]=1$ has a
  representation on\/~$\mathbb{C}[t]$, the space of polynomials in~$t$,
  given by lowering and raising operators
  \begin{displaymath}
    P = f(D),\qquad  M = \left[t-\frac{g'(D)}{g(D)}\right]\frac1{f'(D)},
  \end{displaymath}
  the action of which on the Sheffer polynomials is
  \begin{equation}
  \label{eq:useasalt}
    Ps_n=ns_{n-1},\qquad M s_n=s_{n+1}.
  \end{equation}
  Here, $D$ is interpreted as $D_t = {\rm d}/{\rm d}t$, and though
  $P,M$ may be of infinite order in~$D$, as differential operators
  they map\/ $\mathbb{C}[t]$ into itself.
\end{proposition}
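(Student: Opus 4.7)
The plan is to verify the three claims --- $Ps_n=ns_{n-1}$, $Ms_n=s_{n+1}$, and $[P,M]=1$ --- uniformly, by working with the bivariate exponential generating function $S(t,z) \defeq \sum_{n\ge 0} s_n(t)z^n/n! = \bar g(z){\rm e}^{t\bar f(z)}$, an element of $\mathbb{C}[t][[z]]$. The pivotal observation is that the derivation $D = {\rm d}/{\rm d}t$ acts on $S$ as multiplication (in~$z$) by $\bar f(z)$, since $D\,{\rm e}^{t\bar f(z)} = \bar f(z)\,{\rm e}^{t\bar f(z)}$. Consequently any formal series $F(D)\in\mathbb{C}[[D]]$ acts on $S$ by multiplication by $F(\bar f(z))$, and because $\{s_n\}$ spans $\mathbb{C}[t]$ (as $\deg s_n=n$), any operator identity on $\mathbb{C}[t]$ of the desired shape can be read off by matching $z^n/n!$ coefficients in a corresponding identity in $\mathbb{C}[t][[z]]$.

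For the lowering operator I would first note that $\sum_n n s_{n-1}(t)\,z^n/n! = zS(t,z)$, so one needs an $F(D)$ for which $F(\bar f(z)) = z$; this forces $F=f$, the compositional inverse of $\bar f$. For the raising operator, $\sum_n s_{n+1}(t)\,z^n/n! = \partial_z S$, and direct differentiation yields
\begin{equation*}
\partial_z S = \left[\frac{\bar g'(z)}{\bar g(z)} + t\,\bar f'(z)\right] S.
\end{equation*}
Applying the chain rule to $f(\bar f(z))=z$ gives $\bar f'(z)=1/f'(\bar f(z))$, and differentiating $\bar g(z)=1/g(\bar f(z))$ gives $\bar g'(z)/\bar g(z) = -[g'(\bar f(z))/g(\bar f(z))]\,\bar f'(z)$. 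Substituting $\bar f(z)\mapsto D$ by the key observation then produces exactly the formula for $M$ given in the proposition, and the raising identity $Ms_n=s_{n+1}$ drops out.

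The commutator computation is then short. The operators $f(D)$, $f'(D)$, and $g'(D)/g(D)$ all commute, being functions of the single indeterminate $D$, so $[P,M] = [f(D),t]/f'(D)$. The elementary identity $[F(D),t]=F'(D)$, proved in one line from $[D,t]=1$ by induction on $\deg F$, then gives $[P,M] = f'(D)/f'(D) = 1$. Finally, $P$ and $M$ preserve $\mathbb{C}[t]$ because $f(D)$ strictly lowers degree (having no constant term, as $f$ is of order~$1$), while $1/f'(D)$ and $1/g(D)$ are well-defined formal power series in $D$ (both $f'(D)$ and $g(D)$ have nonzero constant terms) whose action on any given polynomial terminates after finitely many terms.

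The most delicate bookkeeping occurs in the raising step: one must convert the right-hand side of the $\partial_z S$ identity, which involves derivatives of the inverse series $\bar f$ and $\bar g$, into operators in $D$ via the chain rule. Once this conversion is in hand, the spanning property of $\{s_n\}$ immediately promotes the resulting equality of power series in $z$ to an operator identity on $\mathbb{C}[t]$. Beyond this, no deeper tool is required; the argument is in essence the classical umbral-calculus derivation rephrased in the notation of the present paper.
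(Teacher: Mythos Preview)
The paper does not supply its own proof of this proposition; it is stated with citations to \cite{Roman84,Blasiak2006,VerdeStar2025} and used as background. Your argument is the standard umbral-calculus derivation via the bivariate EGF, and it is correct in substance.

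However, you should notice that your derivation actually yields $P=f(D)$, not $P=g(D)$ as printed in the proposition. Your own computation makes this explicit: you require $F(\bar f(z))=z$, which forces $F=f$, and your commutator calculation uses $[P,M]=[f(D),t]/f'(D)$. With $P=g(D)$ one would obtain $[P,M]=g'(D)/f'(D)$, which is not the identity unless $g=f$, and $g(D)S=g(\bar f(z))S=S/\bar g(z)$, which is not $zS$. So the displayed formula $P=g(D)$ in the proposition is evidently a typographical slip for $P=f(D)$, consistent with the cited references. You prove the correct statement, but a proof proposal should flag that the statement as written cannot hold and identify the intended correction, rather than silently proving a different (if correct) claim.

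One minor point of care in your raising-operator step: when you pass from the scalar identity $\partial_z S=[\,t\,\bar f'(z)+\bar g'(z)/\bar g(z)\,]S$ to an operator identity in~$D$, the term $t\,\bar f'(z)\,S$ must be read as the operator $t\circ(1/f'(D))$ applied to~$S$, with the multiplication by~$t$ coming \emph{after} $1/f'(D)$. This is exactly the ordering in the stated formula $M=[t-g'(D)/g(D)]\cdot(1/f'(D))$, so your conclusion is right, but the passage ``substituting $\bar f(z)\mapsto D$'' glosses over an ordering issue that deserves one sentence of justification.
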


The operators $P,M$ are called monomiality operators, and under a
suitable invertible linear transformation of~$\mathbb{C}[t]$, the
polynomials~$s_n$ can be taken to the monomials~$t^n$, and $P,M$
to~$D_t,t$.  (See~\cite{VerdeStar2025} and references therein.)
Another differential statement involving $D=D_t$ is
\begin{displaymath}
  [t^0]\,g(D_t){\rm e}^{\lambda f(D_t)} s_n(t) = \lambda^n,
\end{displaymath}
an equality in $\mathbb{C}[t][[\lambda]]$ which
like~(\ref{eq:useasalt}) can be used as an alternative definition of
the polynomials~$s_n$~\cite{Roman84}.

The exponentiated raising operator ${\rm e}^{\lambda M}$ is of
interest.  Each non-negative power~$M^n$ is an element of
$\mathbb{C}[t][[D_t]]$, and because
$M^ns_0=s_n$ and $s_0(t)\equiv1$, it follows
from~(\ref{eq:appearingin}) that
\begin{equation}
\left[{\rm e}^{\lambda M}1\right](t) = \bar g(\lambda){\rm e}^{t\bar f(\lambda)},  
\end{equation}
an equality in~$\mathbb{C}[t][[\lambda]]$.  The right-hand side is the
Sheffer EGF at $z=\lambda$.

For later reference, note that replacing $D,t$ by $a,\adag$
converts~$M$ to an element~$\tilde M$ of the algebra
$\mathbb{C}[[a]][\adag]$, which is strictly larger than
$\mathbb{C}[a][\adag]$, the realization of~$\mathcal{W}$ by
anti-normal ordered polynomials in~$a,\adag$.

\subsection{Hsu--Shiue arrays and polynomials}
\label{subsec:HS}

Hsu--Shiue arrays $\mathcal{HS}(A,B;r)$, where $A,B,r\in\mathbb{C}$,
are a parametric family of exponential Riordan arrays, i.e., a family
of elements of the group~$\mathfrak{R}$, with associated row
polynomials and~EGF's.  They can be introduced as follows.

The classical Stirling numbers (of the second kind), denoted by
$\left\{{n \atop k}\right\}$ or $S(n,k)$, count the partitions of an
$n$\nobreakdash-set into $k$~disjoint, nonempty subsets.  They make~up
a triangular array indexed by $n,k$ with $0\le k\le n$, and satisfy
\begin{equation}
  x^n = \sum_{k=0}^n \left\{{n \atop k}\right\} x^{\underline k},
\end{equation}
where $x^{\underline k} = x(x-1)\dots(x-k+1)$ is the falling factorial
with unit ``stride.''  So, they are coefficients of connection between
bases of~$\mathbb{C}[x]$.  The sequence of row polynomials
$\sum_{k=0}^n \left\{{n \atop k}\right\} t^k$, $n\ge0$, which could be
called Stirling polynomials (of~the second kind), but are more often
called Touchard, Bell, or exponential polynomials, is the
exponential Riordan sequence ${\bf R}[1,\allowbreak {\rm
    e}^z-\nobreak1]$ or equivalently the Sheffer sequence ${\bf
  S}[1,\allowbreak \ln(1+\nobreak D)]$.

The generalized Stirling numbers of Hsu--Shiue~\cite{Hsu98} are a
family $\HS_{n,k}=\HS_{n,k}(A,B,r)$ that subsumes the classical
Stirling numbers and many previous generalizations.  For
$A,B,r\in\mathbb{C}$, they are defined by
\begin{equation}
\label{eq:subsumes}
(x+r)^{\underline{n},A} = \sum_{k=0}^n HS_{n,k}(A,B,r)\, x^{\underline{k},B},
\end{equation}
where $(x+r)^{\underline{n},A}$, $x^{\underline{k},B}$ are falling
factorials with stride not necessarily equal to unity, e.g.,
$x^{\underline{k},B} = x(x-\nobreak B)\dots[x-\nobreak (k-\nobreak
  1)B]$.  The $\HS_{n,k}$ reduce to $\left\{{n \atop k}\right\}$ if
$(A,B,r)=(0,1,0)$.  The parametric row polynomials
\begin{equation}
\HS_{n}(t;\,A,B,r) = \sum_{k=0}^n \HS_{n,k}(A,B,r) t^k, \qquad n\ge0,
\end{equation}
will be called Hsu--Shiue polynomials.  (Note that $\HS_0(t)\equiv1$
and $\deg\HS_n=n$ for all~$n$.)  

By examination, this $n$-indexed sequence of polynomials is
exponential Riordan for the pair
$[d_{\textit{HS}}(\cdot;A,r),h_{\textit{HS}}(\cdot;A,B)]$, where if
$A,B\neq0$,
\begin{equation}
\label{eq:dhformulae}
  \begin{aligned}
    &d_{\textit{HS}}(z;A,r) \defeq (1+Az)^{r/A},
    \\ 
    &h_{\textit{HS}}(z;A,B)  \defeq \frac{(1+Az)^{B/A}-1}B,
  \end{aligned}
\end{equation}
the right-hand sides being elements of $\mathbb{C}[[z]]$.  (The cases
$A=0$, $B=0$ are handled by taking limits.)  One way of confirming
this is to verify that eq.~(\ref{eq:subsumes}), viewed as an equality
between a column vector indexed by~$n$ and a matrix--vector product,
is consistent with Proposition~\ref{prop:mv}.

It is easily checked that
\begin{equation}
  \overline{[d_{\textit{HS}}(\cdot;A,r),h_{\textit{HS}}(\cdot;A,B)]}
={[d_{\textit{HS}}(\cdot;B,-r),h_{\textit{HS}}(\cdot;B,A)]},
\end{equation}
which is a statement of \emph{duality}: as infinite lower-triangular
matrices and as elements of the exponential Riordan
group~$\mathfrak{R}$, Hsu--Shiue arrays $\mathcal{HS}(A,B,r)$ and
$\mathcal{HS}(B,A,-r)$ are inverses of each other.
(Cf.~\cite{Hsu98}.)  So, one can write
\begin{align}
&
\begin{aligned}
\mathcal{HS}(A,B,r) &= {\bf R}[d_{\textit{HS}}(z;A,r),h_{\textit{HS}}(z;A,B)]\\
&= {\bf S}[d_{\textit{HS}}(D;B,-r),h_{\textit{HS}}(D;B,A)]
\end{aligned}
\\
\shortintertext{and}
&
\begin{aligned}
\HS_{n}(t;\,A,B,r) &= {\bf R}[d_{\textit{HS}}(z;A,r),h_{\textit{HS}}(z;A,B)]_n(t)\\
&= {\bf S}[d_{\textit{HS}}(D;B,-r),h_{\textit{HS}}(D;B,A)]_n(t).
\end{aligned}
\label{eq:usedforL}
\end{align}
Regardless of whether one uses the exponential Riordan or the Sheffer
notation to express the~$\HS_n$, the following holds.
\begin{proposition}
\label{prop:HSEGF}
The parametric Hsu--Shiue polynomials
$\HS_n(t)$ have the EGF
\begin{displaymath}
\sum_{n=0}^\infty \HS_n(t;\,A,B,r) \frac{z^n}{n!} =
(1+Az)^{r/A} \exp\left\{\frac{t}{B}\left[(1+Az)^{B/A} - 1\right]\right\},
\end{displaymath}
with the cases $A=0$, $B=0$ handled by taking limits.  The EGF can be
viewed as an element of\/ $\mathbb{C}[t][[z]]$, or more strongly as an
analytic function on a neighborhood of\/ $(t,z)=(0,0)$.
\end{proposition}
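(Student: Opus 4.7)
The plan is to recognize the claimed EGF as a Riordan EGF and then apply the matrix--vector machinery already in hand. First, observe that the right-hand side of the proposition equals
\[
d_{\textit{HS}}(z;A,r)\exp\bigl[t\,h_{\textit{HS}}(z;A,B)\bigr],
\]
which by the definition of an exponential Riordan sequence is the EGF of the array $\mathbf{R}[d_{\textit{HS}}(\cdot;A,r),h_{\textit{HS}}(\cdot;A,B)]$. So it suffices to show that the coefficients $\HS_{n,k}(A,B,r)$ defined by (\ref{eq:subsumes}) coincide with the entries of this Riordan array.

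To do this, I would apply Proposition~\ref{prop:mv}. Fix a generic $x$ and read (\ref{eq:subsumes}) as the matrix--vector identity $v_n=\sum_k s_{n,k}u_k$, with $u_k=x^{\underline{k},B}$ and $v_n=(x+r)^{\underline{n},A}$. Rescaling the classical identity $\sum_{k\ge 0} y^{\underline{k}}\,t^k/k!=(1+t)^y$ yields
\[
u(t)=\sum_{k\ge 0} x^{\underline{k},B}\,\frac{t^k}{k!}=(1+Bt)^{x/B},\qquad
v(z)=\sum_{n\ge 0}(x+r)^{\underline{n},A}\,\frac{z^n}{n!}=(1+Az)^{(x+r)/A}.
\]
A short exponent calculation, using $1+B\,h_{\textit{HS}}(z;A,B)=(1+Az)^{B/A}$, then gives
\[
d_{\textit{HS}}(z;A,r)\,u\bigl(h_{\textit{HS}}(z;A,B)\bigr)
=(1+Az)^{r/A}(1+Az)^{x/A}=(1+Az)^{(x+r)/A}=v(z).
\]
By Proposition~\ref{prop:mv}, this is precisely the statement $v=\mathbf{R}[d_{\textit{HS}},h_{\textit{HS}}]\,u$. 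Since $\{x^{\underline{k},B}\}_{k\ge 0}$ is a basis of $\mathbb{C}[x]$, the identification $\HS_{n,k}(A,B,r)=\mathbf{R}[d_{\textit{HS}},h_{\textit{HS}}]_{n,k}$ is forced, and summing on~$k$ against $t^k$ produces the asserted EGF.

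The degenerate cases $A=0$ and $B=0$ are handled by passing to the limit: $(1+Az)^{r/A}\to {\rm e}^{rz}$ as $A\to 0$, and $B^{-1}[(1+Az)^{B/A}-1]\to A^{-1}\log(1+Az)$ as $B\to 0$. The underlying falling factorials, the Riordan pair, and the computation above all admit these limits continuously, so the formula carries over. The strengthened analytic statement near $(t,z)=(0,0)$ follows because $(1+Az)^\alpha$ and the exponential of an analytic function are analytic at $z=0$, giving a genuine analytic function on a suitable polydisc. The only mildly annoying step is the bookkeeping required to carry limits through each piece of the construction; the core derivation is a one-line verification.
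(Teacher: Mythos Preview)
Your argument is correct and follows exactly the route the paper itself sketches: just before the proposition, the paper says that the identification of the $\HS_{n,k}$ with a Hsu--Shiue Riordan array can be confirmed by checking that eq.~(\ref{eq:subsumes}), read as a matrix--vector product, is consistent with Proposition~\ref{prop:mv}. You have simply carried out that verification in full, computing the EGF's $u(t)=(1+Bt)^{x/B}$ and $v(z)=(1+Az)^{(x+r)/A}$ and checking $d_{\textit{HS}}(z)\,u(h_{\textit{HS}}(z))=v(z)$, then invoking the basis property of the generalized falling factorials to force equality of the matrices.
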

From this EGF, which will be denoted by $\mathcal{HS}(t,z) =
\mathcal{HS}(t,z;A,B,r)$, the $\HS_n(t)$ can be derived by repeated
differentiation.  For each~$k$ with $0\le k\le n$, the coefficient
$\HS_{n,k}$ of the $k$'th power of~$t$ in~$\HS_n(t)$ will be a
polynomial in $A,B,r$ with integer coefficients.

If $B\neq0$, the $\HS_{n,k}$ can
also be computed from
\begin{equation}
\label{eq:HSsummation}
  \HS_{n,k}(A,B,r) = 
\frac{1}{B^kk!}\sum_{j=0}^k (-1)^{k-j} \binom{k}{j}(Bj+r)^{\underline{n},A},
\end{equation}
which comes from eq.~(\ref{eq:subsumes}) by finite-differencing to
extract coefficients~\cite{Corcino2001}.  And for any $A,B,r$, the
$\HS_{n,k}$ can be computed from the recurrence~(\ref{eq:HSrec}),
below.
\begin{proposition}
  The following are equivalent characterizations of the EGF
  \begin{displaymath}
    \mathcal{HS}(t,z;\,A,B,r) = \sum_{n=0}^\infty \HS_n(t;\,A,B,r)\frac{z^n}{n!}
    = \sum_{n=0}^\infty \sum_{k=0}^n \HS_{n,k}(A,B,r)\,t^k\frac{z^n}{n!},
  \end{displaymath}
which is an element of\/ $\mathbb{C}[t][[z]]$.
\begin{enumerate}
\item[(i)] The explicit formula for\/ $\mathcal{HS}(t,z;A,B,r)$ given
  in Proposition~\ref{prop:HSEGF}.
\item[(ii)] The first-order partial differential equation (PDE)
  \begin{equation}
\left[    (-Az-1)\frac{\partial}{\partial z} + Bt\,\frac{\partial}{\partial t} + (r+t)\right] \mathcal{HS}=0,
  \end{equation}
  on a neighborhood of\/ $(t,z)=(0,0)$, with the initial condition\/
  $\mathcal{HS}(t,0)\equiv1$.
\item[(iii)]
  The triangular recurrence
  \begin{equation}
    \label{eq:HSrec}
    \HS_{n+1,k+1} = \left[-An+B(k+1)+r\right]\HS_{n,k+1} + \HS_{n,k},
  \end{equation}
  with the initial condition that the apex element $\HS_{0,0}$ equal
  unity, and the convention that $\HS_{n,k}=0$ if\/ $k<0$ or\/~$k>n$.
\end{enumerate}
\end{proposition}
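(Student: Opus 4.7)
The plan is to establish the three characterizations are equivalent by closing the cycle (i) $\Rightarrow$ (ii) $\Rightarrow$ (iii) $\Rightarrow$ (i). The implications (i)$\Rightarrow$(ii) and (ii)$\Rightarrow$(iii) are direct calculations; the closing implication (iii)$\Rightarrow$(i) is then automatic, because the triangular recurrence together with the apex condition $\HS_{0,0}=1$ determines all coefficients $\HS_{n,k}$ uniquely, hence determines the formal EGF.

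For (i)$\Rightarrow$(ii), write $\mathcal{HS} = (1+Az)^{r/A}\cdot\exp\{(t/B)[(1+Az)^{B/A}-1]\}$ and differentiate. Logarithmic differentiation of the two factors with respect to $z$ gives
\begin{equation*}
\partial_z \mathcal{HS} \;=\; \left[\frac{r}{1+Az} + \frac{t\,(1+Az)^{B/A}}{1+Az}\right]\mathcal{HS},
\end{equation*}
so that $(1+Az)\,\partial_z\mathcal{HS}=[\,r+t(1+Az)^{B/A}\,]\mathcal{HS}$. Differentiation in $t$ is immediate: $B t\,\partial_t \mathcal{HS} = t\bigl[(1+Az)^{B/A}-1\bigr]\mathcal{HS}$. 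Subtracting the second identity from the first cancels the awkward $t(1+Az)^{B/A}$ term and yields $(1+Az)\partial_z\mathcal{HS}=(r+t)\mathcal{HS}+Bt\,\partial_t\mathcal{HS}$, which is the PDE of~(ii). The initial condition $\mathcal{HS}(t,0)\equiv 1$ follows by inspection.

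For (ii)$\Rightarrow$(iii), substitute the formal ansatz $\mathcal{HS}=\sum_{n,k}\HS_{n,k}\,t^k z^n/n!$ into the PDE and read off the coefficient of $t^{k+1}z^n/n!$. The term $-\partial_z\mathcal{HS}$ contributes $-\HS_{n+1,k+1}$; the term $-Az\,\partial_z\mathcal{HS}$ contributes $-An\,\HS_{n,k+1}$ (the shift $z\partial_z$ supplies a factor of $n$); the term $Bt\,\partial_t\mathcal{HS}$ contributes $B(k+1)\HS_{n,k+1}$; and $(r+t)\mathcal{HS}$ contributes $r\,\HS_{n,k+1}+\HS_{n,k}$. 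Setting the sum to zero gives precisely the recurrence~(\ref{eq:HSrec}). The apex condition $\HS_{0,0}=1$ is the $n=k=0$ coefficient of $\mathcal{HS}(t,0)\equiv1$, and the boundary conventions $\HS_{n,k}=0$ for $k\notin[0,n]$ are consistent with the fact that $\deg_t \HS_n=n$ and the absence of negative powers.

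To close the cycle, note that (\ref{eq:HSrec}) is triangular: $\HS_{n+1,k+1}$ is expressed in terms of entries in row~$n$, so the apex datum $\HS_{0,0}=1$ propagates to a unique double sequence, and thus to a unique element of $\mathbb{C}[t][[z]]$. Since the explicit EGF of~(i) satisfies (ii) by the first step and hence (iii) by the second, that unique sequence is the one it generates. The only mild subtlety — and the main thing to watch — is the handling of the degenerate parameter values $A=0$ or $B=0$: both the explicit formula (via $\lim_{A\to0}(1+Az)^{c/A}=\mathrm{e}^{cz}$ and $\lim_{B\to0}[(1+Az)^{B/A}-1]/B=\log(1+Az)/A$) and the PDE/recurrence are continuous in $A,B$, so the equivalence of (i)--(iii) at these values follows by passing to limits, or equivalently by rerunning the differentiation with the limiting closed forms. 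No deep obstacle arises; the work is confined to careful bookkeeping in the coefficient extraction.
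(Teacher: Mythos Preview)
Your proof is correct and very close to the paper's own argument. Both verify (i)$\Rightarrow$(ii) by direct differentiation and establish (ii)$\Leftrightarrow$(iii) by coefficient extraction. The one genuine difference is in how the equivalence is closed: the paper proves (ii)$\Rightarrow$(i) by solving the PDE via the Lagrange--Charpit method of characteristics, thereby \emph{rederiving} the explicit formula; you instead close the cycle with (iii)$\Rightarrow$(i) via a uniqueness argument for the triangular recurrence. Your route is more elementary and avoids any analytic input beyond formal power series, which is a small gain; the paper's route has the advantage of being constructive, in that it recovers the closed form of Proposition~3.4 from the PDE rather than merely verifying it. Either way the logical content is the same, and your handling of the degenerate cases $A=0$, $B=0$ by continuity is adequate.
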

\begin{proof}
  That $\mathcal{HS}(t,z)$ as supplied by Proposition~\ref{prop:HSEGF}
  satisfies the PDE follows by direct computation.  The PDE (with
  initial condition) can be solved on a neighborhood of $(t,z)=(0,0)$
  by the Lagrange--Charpit method of characteristics, yielding the
  same analytic formula for $\mathcal{HS}(t,z)$.  And, it is easily
  checked that $\mathcal{HS}(t,z)$, a~function analytic at~$(0,0)$
  with bivariate Maclaurin expansion in $\mathbb{C}[t][[z]]$,
  satisfies the PDE if and only if its expansion coefficients
  $\HS_{n,k}$ satisfy the recurrence~(\ref{eq:HSrec}).
\end{proof}

For some choices of $A,B,r$, explicit expressions for the polynomial
coefficients $\HS_{n,k}(A,B,r)$ and the polynomials $\HS_n(t; A,B,r)$
can be obtained from the summation formula~(\ref{eq:HSsummation}) or
the recurrence~(\ref{eq:HSrec}).  For instance,
\begin{subequations}
  \begin{align}
    & \HS_{n,k}(B,B,r) = \binom{n}{k} r^{\underline{n-k},B},\\
    & \HS_{n,k}(-B,B,r) = \binom{n}{k} (Bk+r)^{\overline{n-k},B},
    \label{eq:43b}
  \end{align}
\end{subequations}
with an obvious notation for the rising factorial.  Notice that
by~(\ref{eq:43b}),
\begin{alignat}{2}
  &\HS_{n,k}(-1,1,1) = {\binom{n}{k}}\frac{n!}{k!}, &\qquad&\HS_{n,k}(1,-1,-1) = (-1)^{n-k}{\binom{n}{k}}\frac{n!}{k!},
\\
\shortintertext{and}
  &\HS_n(t;\,-1,1,1) = n!\,L_n(-t),&\qquad&\HS_n(t;\,1,-1,-1) = (-1)^nn!\,L_n(t),
\label{eq:HStoLaguerre}
\end{alignat}
where $L_n$ is the $n$'th Laguerre polynomial.  

By examination, multiplying the parameters $A,B,r$ of any exponential
Riordan array $\mathcal{HS}=(\HS_{n,k})$ of Hsu--Shiue type by
$c\neq0$ multiplies $\HS_{n,k}$ by $c^{n-k}$ and alters $\HS_n(t)$ to
$c^n\HS_n(t/c)$.  As an application, the $m=n$ case of the
$s$\nobreakdash-ordering conversion formula~(\ref{eq:genconv}) can be
rewritten as
\begin{equation}
  \normord{(\adag a)^n}_s 
  =\normord{\delta^n n!\,L_n(-\adag a/\delta)}_{s'}
  = \normord{\HS_n(\adag a;\,-\delta, \delta,\delta)}_{s'},
\end{equation}
where $\delta=(s-s')/2$.  The arrays
$\mathcal{HS}(-\delta,\delta,\delta)$, $\delta\in\mathbb{C}$, make~up
an `$s$\nobreakdash-shifting' subgroup of~$\mathfrak{R}$ that is
isomorphic to the additive group of~$\mathbb{C}$.

One can see from~(\ref{eq:usedforL}) that the polynomial sequence
$n!\,L_n(-t)$ is the Sheffer sequence $\mathbf{S}[1/(1+\nobreak
  D),\allowbreak D/\allowbreak(1+\nobreak D)]$, or equivalently
$\mathbf{R}[1/(1-\nobreak z), z/\allowbreak (1-\nobreak z)]$; and
$(-1)^nn!\,L_n(t)$ is $\mathbf{S}[1/(1-\nobreak D),\allowbreak
  D/\allowbreak(1-\nobreak D)]$ or $\mathbf{R}[1/(1+\nobreak z),
  z/\allowbreak (1+\nobreak z)]$.  (Cf.~\cite{Roman84}.)  The
just-mentioned $1$\nobreakdash-parameter subgroup of~$\mathfrak{R}$
comprises $\mathbf{S}[1/(1+\nobreak \delta D],\allowbreak
D/(1+\nobreak\delta D)]$ or $\allowbreak
  \mathbf{R}[1/(1-\nobreak\delta z),\allowbreak z/(1-\nobreak\delta
    z)]$, $\delta\in\mathbb{C}$.

\smallskip
An unrelated fact will be needed below: besides
$\HS_{n,k}(0,1,0)=\left\{{n \atop k}\right\}$, one can prove by
induction that $\HS_{n,k}(0,1,1)=\left\{{{n+1} \atop {k+1}}\right\}$.

\section{Exponentiated operators}
\label{sec:expops}

The main result of this paper, on the $s$-ordered representation of
the exponential of a single-mode ``boson string'' of the form $\adag^L
a\adag^R$, will be deduced in Sec.~\ref{sec:main} from a restricted
result on a normal ordering; or equally well, an anti-normal one.  The
normal and anti-normal results, given in
Theorem~\ref{thm:identities34} below, appeared in the Introduction as
eqs.\ (\ref{eq:3}),(\ref{eq:4}).  They will follow from
Theorem~\ref{thm:extract}, a result of the author on the $n$'th power
of $\adag^L a\adag^R$ \cite[Thm.~6.5]{Maier29}, a~short proof of which
is supplied, to make this paper self-contained.

As will be sketched, they also follow from the results of
Lang~\cite{Lang2000,Lang2007} and B{\l}asiak
et~al.\ \cite{Blasiak2005,Blasiak2006} on exponentiated vector fields
and on applications of Sheffer sequences to ordering problems.  That
Fan et~al.~\cite{Fan2006} obtained the $R=\nobreak0$ case of
eq.~(\ref{eq:3}) by the formal IWOP (integration within ordered
products) technique~\cite{Wunsche99} must also be mentioned.

To introduce further the coefficients of Hsu--Shiue and other Sheffer
polynomials, consider the following identities.  Firstly, the two
in~$\mathcal{W}[[\lambda]]$,
\begin{subequations}
\begin{align}
\label{eq:introduce1a}
{\rm e}^{\lambda(\adag a)} &= \normord{\exp\left[ \left({\rm e}^\lambda-1\right)\adag a \right]}_N 
\\
&= \normord{\mathcal{HS}(\adag a,\lambda;\,0,1,0)}_N = \normord{\sum_{k=0}^n\HS_n(\adag a;\,0,1,0)\frac{\lambda^n}{n!}}_N
\label{eq:introduce1b}
\\
\shortintertext{and}
\stepcounter{parentequation}\gdef\theparentequation{\arabic{parentequation}}\setcounter{equation}{0}
\label{eq:introduce2a}
      {\rm e}^{\lambda(a \adag)} &= \normord{{\rm e}^\lambda\exp\left[ \left({\rm e}^\lambda-1\right)a \adag \right]}_N 
\\
&= \normord{\mathcal{HS}(\adag a,\lambda;\,0,1,1)}_N= \normord{\sum_{k=0}^n\HS_n(\adag a;\,0,1,1)\frac{\lambda^n}{n!}}_N.
\label{eq:introduce2b}
\end{align}
\end{subequations}
Here, (\ref{eq:introduce1a}) is the Cahill--Glauber normal ordering
shown in~(\ref{eq:CH1}), and (\ref{eq:introduce2a})~is an immediate
consequence.  The rewritings in~terms of $\mathcal{HS}(t,z)$ come from
the $A\to0$ limit of the EGF formula in Proposition~\ref{prop:HSEGF}.
Secondly, the two in~$\mathcal{W}$,
\begin{align}
  \label{eq:introduce3}
  (\adag a)^n &= \sum_{k=0}^n \left\{{{n} \atop {k}}\right\} \adag^k a^k
  = \sum_{k=0}^n \HS_{n,k}(0,1,0)\,\adag^k a^k
  \\
  \shortintertext{and}
  (a \adag)^n &= \sum_{k=0}^n \left\{{{n+1} \atop {k+1}}\right\} \adag^k a^k 
  = \sum_{k=0}^n \HS_{n,k}(0,1,1)\,\adag^k a^k.
  \label{eq:introduce4}
\end{align}
These are due to Katriel~\cite{Katriel74}, who showed for the first
time how Stirling numbers arise in boson algebra.  Each can also be
proved combinatorially, by applying Wick's theorem and enumerating
contractions~\cite[\S9.3]{Gough2018}.  More generally, one can expand
any polynomial in $\adag a$ or~$a\adag$ in the monomials $\adag^k a^k$
or~$a^k\adag^k$~\cite{VargasMartinez2006}.

Clearly (\ref{eq:introduce1b}),(\ref{eq:introduce2b}) follow from
(\ref{eq:introduce3}),(\ref{eq:introduce4}) by multiplying by
$\lambda^n/n!$ and summing over~$n$.  What is needed for a
generalization to the exponential operator ${\rm e}^{\lambda (\adag^L
  a \adag^R)}$ is a formula for $(\adag^L a \adag^R)^n$.  Duchamp
et~al.\ proved the following.

\begin{proposition}[\cite{Duchamp2004}]
\label{prop:duchampetal}
  If $\omega$ is any ``single annihilator'' word over the alphabet
  $\{a,\adag\}$, i.e., $\omega=\adag^L a \adag^R$, and $e=L+R-1\ge0$,
  then for all\/ $n\ge0$,
  \begin{equation}
    \label{eq:inprop}
    \omega^n 
    = \left(\adag^e\right)^n \left[\sum_{k=0}^n \mathbf{R}[d,h]_{n,k}\,\adag^k a^k\right],
  \end{equation}
  an identity in $\mathcal{W}$, where $\mathbf{R}[d,h]$ is some
  $\omega$-specific exponential Riordan array.
\end{proposition}
What is difficult is finding explicitly the $(L,R)$-dependent series
$d,h\in\mathbb{C}[[z]]$ that define the Riordan array.  Identities
(\ref{eq:introduce2a}),(\ref{eq:introduce2b}) suggest that
$\mathbf{R}[d,h]$ should be of Hsu--Shiue type.  But the derivation of
the identity~(\ref{eq:inprop}) need not be combinatorial, or based on
a generalization of Wick's theorem.

Combinatorial interpretations of the reordered coefficients of such
boson strings as $\adag^{r_n}a^{s_n}\dots
\adag^{r_1}a^{s_1}\in\mathcal{W}$ are known~\cite{Mendez2005}.  Also,
interpretations of the coefficients $\HS_{n,k}(A,B,r)$ for general
integer values of $A,B,r$ are known, such as ones involving the
placing of balls in cells and their compartments, or the drawing of
balls from an urn.  But such combinatorial interpretations tend
at~present to be complicated.  (See, e.g.,
\cite[Chapter~4]{Mansour2016}.)  The following two parametric
identities in~$\mathcal{W}$ are extracted
from~\cite[Theorem~6.5]{Maier29}, and have a non-combinatorial proof.
The first subsumes Katriel's formulae, (\ref{eq:introduce3})
and~(\ref{eq:introduce4}).
\begin{theorem}
  \label{thm:extract}
  For all integer $L,R\ge0$ with $e=L+R-1 \ge0$,
  \begin{align}
    &
    \label{eq:noncombin1}
    \begin{aligned}
      \left(\adag^L a \adag^R\right)^n &= \left(\adag^e\right)^n \left[\sum_{k=0}^n \HS_{n,k}(-e,1,R)\,\adag^k a^k\right]
      \\
      &= \normord{\left(\adag^e\right)^n \HS_{n}(\adag a;\, -e,1,R)}_N
    \end{aligned}
    \\
    \shortintertext{and}
    &
    \label{eq:noncombin2}
    \begin{aligned}
      \left(\adag^L a \adag^R\right)^n &= \left[\sum_{k=0}^n \HS_{n,k}(e,-1,-L)\,a^k \adag^k \right]\left(\adag^e\right)^n 
      \\
      &= \normord{\HS_{n}(a \adag;\,e,-1,-L)\left(\adag^e\right)^n}_A,
    \end{aligned}
\end{align}
  where the $\HS_{n,k}$ coefficients can be computed from
  the summation formula\/ {\rm(\ref{eq:HSsummation})} or
  the recurrence\/~{\rm(\ref{eq:HSrec})}.
\end{theorem}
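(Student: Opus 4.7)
The plan is to prove both identities by induction on~$n$, with the inductive step keyed to the triangular Hsu--Shiue recurrence~(\ref{eq:HSrec}). The two identities are dual in the sense that (\ref{eq:noncombin1}) extracts all factors of~$\adag^e$ to the left while (\ref{eq:noncombin2}) extracts them to the right, so I would prove the normal-ordered identity in full and note that the anti-normal one follows by the mirror-image computation.

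For (\ref{eq:noncombin1}), the base case $n=0$ is immediate from $\HS_{0,0}(-e,1,R)=1$. For the inductive step, assume $\omega^n = (\adag^e)^n P_n$ with $P_n = \sum_k \HS_{n,k}(-e,1,R)\,\adag^k a^k$, and left-multiply by $\omega = \adag^L a \adag^R$. Using the elementary commutator $a\adag^m = \adag^m a + m\,\adag^{m-1}$ with $m = R+en$, together with the identity $L+R+en = e(n+1)+1$, one obtains
\[
\omega^{n+1} = (\adag^e)^{n+1}\bigl(\adag a + R + en\bigr)P_n .
\]
A second application of the same commutator gives $\adag a\cdot\adag^k a^k = \adag^{k+1}a^{k+1} + k\,\adag^k a^k$, from which the coefficient of $\adag^k a^k$ in $P_{n+1}$ is $\HS_{n,k-1}(-e,1,R) + (k+R+en)\,\HS_{n,k}(-e,1,R)$. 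This is precisely the recurrence~(\ref{eq:HSrec}) with $(A,B,r)=(-e,1,R)$ and index shifted by one, closing the induction.

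For the anti-normal identity~(\ref{eq:noncombin2}), the parallel argument right-multiplies by~$\omega$ and uses the dual commutator $\adag^m a = a\adag^m - m\,\adag^{m-1}$ with $m = en + L$, leading to $\omega^{n+1} = Q_n\bigl[a\adag - (en+L)\bigr](\adag^e)^{n+1}$; expanding $a^k\adag^k\cdot a\adag = a^{k+1}\adag^{k+1} - k\,a^k\adag^k$ then reproduces~(\ref{eq:HSrec}) with $(A,B,r)=(e,-1,-L)$ (the sign flip in $B$ accounting for the minus sign in the dual commutator). The key structural observation is that $\adag a$ acts on the normal-ordered monomials $T_k=\adag^k a^k$ (respectively $a\adag$ on the anti-normal ones $a^k\adag^k$) by the recurrence $\adag a \cdot T_k = T_{k+1} + k\,T_k$ that also governs the Stirling numbers of the second kind, and the Hsu--Shiue parameters simply encode the extra diagonal shifts coming from the offset~$e$ and the exponents~$L,R$. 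There is no serious obstacle; the main care required is bookkeeping of signs and of the identity $L+R+en = e(n+1)+1$, which is what makes the prefactor $(\adag^e)^n$ grow cleanly to $(\adag^e)^{n+1}$ at each step.
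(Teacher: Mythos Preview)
Your inductive proof is correct. The commutator bookkeeping checks out: left-multiplying by $\omega$ pulls the extra $\adag^e$ to the front via $L+R+en=e(n+1)+1$, and the action of $\adag a+R+en$ on the normal-ordered basis $\adag^k a^k$ reproduces exactly the recurrence~(\ref{eq:HSrec}) with $(A,B,r)=(-e,1,R)$; the anti-normal side works the same way with the sign flips you indicate.

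This is a genuinely different route from the paper's proof. The paper does not use induction or the recurrence at all: it passes to the faithful representation $a\mapsto D_x$, $\adag\mapsto x$, substitutes $x\mapsto xD_x$ into the defining connection relation~(\ref{eq:subsumes}) for the $\HS_{n,k}$, and then rewrites both sides using generalized Boole identities of the form $(xD_x)^{\underline{n},A}=(x^A)^n(x^{1-A}D_x)^n$ together with the similarity shift $(\cdot)\mapsto x^{-r}(\cdot)x^r$. So the paper's argument is essentially a one-shot algebraic identity in the $(x,D_x)$ calculus, keyed to the falling-factorial definition of $\HS_{n,k}$, whereas yours is keyed to the triangular recurrence. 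Your approach is more elementary---it needs nothing beyond $[a,\adag]=1$ and the recurrence~(\ref{eq:HSrec})---and makes transparent why those particular Hsu--Shiue parameters arise, since they are forced term by term. The paper's approach, by contrast, explains structurally why a Hsu--Shiue array must appear at all (it is literally the connection matrix between two factorial bases evaluated at $xD_x$), and it delivers both identities without a separate inductive step.
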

\begin{proof}
  Representing $a,\adag$ by $D_x,x$, acting on any convenient space of
  functions of~$x$, yields a faithful representation of~$\mathcal{W}$,
  so it suffices to obtain versions of
  (\ref{eq:noncombin1}),(\ref{eq:noncombin2}) with $a,\adag$ replaced
  by $D_x,x$.
  
  Replacing $x$ by~$xD_x$ in the definition~(\ref{eq:subsumes}) of the
  coefficients~$\HS_{n,k}$ yields
  \begin{equation}
    \label{eq:rewritten}
    (xD_x+r)^{\underline{n},A} = \sum_{k=0}^n   HS_{n,k}(A,B,r)\, (xD_x)^{\underline{k},B}.
  \end{equation}
  This can be rewritten with the aid of Boole's lemma
  $(xD_x)^{\underline{n}} = x^nD_x^n$, which has a long
  history~\cite{Mansour2016}.  The more general
  forms of the lemma
  \begin{subequations}
  \begin{align}
    \label{eq:majorrole1a}
    (xD_x)^{\underline n,A} &= (x^A)^n (x^{1-A}D_x)^n,\\
    &= (xD_xx^A)^n (x^{-A})^n
    \label{eq:majorrole1b}
  \end{align}
  \end{subequations}
  (cf.~\cite{MohammadNoori2011}) can be proved by induction, or
  operationally: the left and right-hand sides act identically on
  monomials in~$x$.  The yet more general forms
  \begin{subequations}
  \begin{align}
    \label{eq:majorrole2a}
    (xD_x+r)^{\underline n,A} &= x^{-r}\left[(x^A)^n (x^{1-A}D_x)^n\right]x^r
    =(x^A)^n(x^{1-A-r}D_xx^r)^n,
    \\
    &= x^{-r}\left[ (xD_xx^A)^n (x^{-A})^n \right] x^r
    = (x^{1-r}D_xx^{A+r})^n (x^{-A})^n
    \label{eq:majorrole2b}
    \end{align}
    \end{subequations}
  follow by the similarity transformation $(\cdot)\mapsto
  x^{-r}(\cdot)x^r$.

  To obtain a version of (\ref{eq:noncombin1}) with $a,\adag$ replaced
  by $D_x,x$, rewrite the left-hand side of~(\ref{eq:rewritten}) with
  the aid of~(\ref{eq:majorrole2a}), and the right-hand side with the
  aid of~(\ref{eq:majorrole1a}); and also set $(A,B,r)=(-e,1,R)$.  To
  obtain~(\ref{eq:noncombin2}) similarly, use (\ref{eq:majorrole2b})
  and~(\ref{eq:majorrole1b}), and set
  $(A,B,r)=\allowbreak(e,-1,1-\nobreak L)$.
\end{proof}

\begin{example}
  If $L=R=1$, identities
  (\ref{eq:noncombin1}),(\ref{eq:noncombin2}) of the theorem reduce to
  \begin{gather}
    \label{eq:laguerre1}
      \left(\adag a \adag\right)^n = 
      \left\{
      \begin{aligned}
        &\normord{\adag^n \HS_n(\adag a;\,-1,1,1)}_N  = n!\;\normord{\adag^n L_n(-\adag a)}_N,
        \\
        &\normord{\HS_n(a \adag;\,1,-1,-1) a^n }_A  = (-1)^nn!\;\normord{L_n(a \adag) \adag^n}_A,
      \end{aligned}
      \right.
      \\
      \shortintertext{by (\ref{eq:HStoLaguerre}), so that}
        \label{eq:laguerre2}
        \left(\adag a \adag\right)^n = 
        \left\{
        \begin{aligned}
          &\adag^n \cdot \sum_{k=0}^n \frac{n!^2}{k!^2(n-k)!}\, \adag^k a^k,
          \\
          &\sum_{k=0}^n (-1)^{n-k}\frac{n!^2}{k!^2(n-k)!}\, a^k \adag^k   \cdot \adag^n .
        \end{aligned}
        \right.
\end{gather}
  These are known Laguerre-polynomial
  representations~\cite{Penson2009}, which by the $(s,s')=(\pm1,\mp1)$
  cases of the conversion formula~(\ref{eq:genconv}) can be seen to be
  equivalent to $(\adag a \adag)^n=\adag^na^n\adag^n$, the Viskov
  identity~\cite{Mansour2016}.
\end{example}

Multiplying the identities in Theorem~\ref{thm:extract} by
$\lambda^n/n!$ and summing over~$n$ yields the following identities
in~$\mathcal{W}[[\lambda]]$, with the two explicit expressions coming
from the EGF formula in Proposition~\ref{prop:HSEGF}.  The first
subsumes (\ref{eq:introduce1b}) and~(\ref{eq:introduce2b}).
\begin{theorem}
\label{thm:identities34}
For all integer $L,R\ge0$ with $e=L+R-1 \ge0$, 
$G={\rm e}^{\lambda(\adag^L a \adag^R)}$
has the ordered representations
\begin{align}
&
\begin{aligned}
{\rm e}^{\lambda(\adag^L a \adag^R)}
&=\normord{\mathcal{HS}(\adag a, \lambda\adag^e;\, -e,1,R)}_N\\
&= \normord{(1-e\lambda\adag^e)^{-R/e} \exp\left\{ \left[\bigl(1-e\lambda\adag^e\bigr)^{-1/e}-1\right] \adag a\right\}}_N
\end{aligned}
\label{eq:fullgen}
\\
\shortintertext{and}
&
\begin{aligned}
{\rm e}^{\lambda(\adag^L a \adag^R)}
&=\normord{\mathcal{HS}(a \adag, \lambda\adag^e;\,e,-1,-L)}_A\\
&= \normord{\exp\left\{ - a\adag \left[\bigl(1+e\lambda\adag^e\bigr)^{-1/e}-1\right] \right\}   (1+e\lambda\adag^e)^{-L/e} }_A,
\end{aligned}
\end{align}
with the case $e=0$ handled by taking the formal\/ $e\to0$ limit.
\end{theorem}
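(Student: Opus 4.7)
The strategy is to sum the weighted powers of $\omega = \adag^L a \adag^R$ supplied by Theorem~\ref{thm:extract}, and to recognize the resulting series via the closed-form EGF of Proposition~\ref{prop:HSEGF}.

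Concretely, I would take the first identity of Theorem~\ref{thm:extract},
\[
(\adag^L a \adag^R)^n = \normord{(\adag^e)^n \HS_n(\adag a;\, -e, 1, R)}_N,
\]
multiply by $\lambda^n/n!$, and sum over $n \ge 0$. On the left this assembles into ${\rm e}^{\lambda(\adag^L a \adag^R)} \in \mathcal{W}[[\lambda]]$. On the right, the $\mathbb{C}$-linearity of $\normord{\cdots}_N$ extends to $\mathcal{W}_c[[\lambda]] \to \mathcal{W}[[\lambda]]$, so the sum passes inside the double dots. Inside the bracket, $\adag$ and $a$ commute, so I can coalesce $\lambda^n (\adag^e)^n$ into $(\lambda\adag^e)^n$, producing the series $\sum_{n\ge0} \HS_n(\adag a;\, -e, 1, R)\, (\lambda\adag^e)^n/n!$, which is precisely $\mathcal{HS}(\adag a, \lambda\adag^e;\, -e, 1, R)$. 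The substitution $z \mapsto \lambda\adag^e$ is legitimate because $\mathcal{HS}(t,z) \in \mathbb{C}[t][[z]]$ and $\lambda\adag^e$ has positive $\lambda$-order, so the result is a well-defined element of $\mathcal{W}[[\lambda]]$. Applying the explicit EGF from Proposition~\ref{prop:HSEGF} with $(A,B,r) = (-e, 1, R)$ and $t = \adag a$ then yields the second line of~(\ref{eq:fullgen}). The anti-normal identity follows in exactly the same way from the second line of Theorem~\ref{thm:extract}, with $(A,B,r) = (e, -1, -L)$ and $t = a\adag$; the minus sign in the exponent there comes from the factor $(t/B)[(1+Az)^{B/A}-1]$ with $B = -1$.

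For $e = 0$ (the two cases $(L,R) \in \{(1,0),(0,1)\}$), I would invoke the formal $e \to 0$ limit already guaranteed by Proposition~\ref{prop:HSEGF}: $(1 \mp e\lambda)^{-1/e} \to {\rm e}^{\pm\lambda}$ returns the McCoy--Schwinger formula and its anti-normal counterpart. I do not foresee a genuine obstacle: the combinatorial substance has already been extracted in Theorem~\ref{thm:extract}, and what remains is the routine bookkeeping needed to carry out the generating-function substitution and verify that $\mathcal{W}[[\lambda]]$ has enough room to accommodate it.
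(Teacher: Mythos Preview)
Your proposal is correct and follows essentially the same route as the paper: multiply the identities of Theorem~\ref{thm:extract} by $\lambda^n/n!$, sum over~$n$, and identify the result via the Hsu--Shiue EGF of Proposition~\ref{prop:HSEGF}. You have merely fleshed out the bookkeeping (linearity of~$\normord{\cdots}$, well-definedness of the substitution $z\mapsto\lambda\adag^e$) that the paper leaves implicit.
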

\begin{remarkaftertheorem}
The explicit expressions for $G$ contain its
``N\nobreakdash-transform'' and ``A\nobreakdash-transform,'' which are
what are acted upon by the quantizations $\normord{\dots}_N$ and
$\normord{\dots}_A$ and could be denoted by
$F^{(N)}(\alpha,\alpha^*)$, $F^{(A)}(\alpha,\alpha^*)\in\allowbreak
\mathcal{W}_c[[\lambda]]=\allowbreak
\mathbb{C}[\alpha,\alpha^*][[\lambda]]$, as in Sec.~\ref{sec:orderings}.
As was explained, $\alpha,\alpha^*$ are optionally written as
$a,\adag$ inside each~$\normord{\dots}$.
\end{remarkaftertheorem}
\begin{remarkaftertheorem}
The two identities $G=\normord{\dots}_N$ and $G=\normord{\dots}_A$,
with their explict right-hand sides coming from
Proposition~\ref{prop:HSEGF}, imply each other.  To convert either to
the other, (1)~take the adjoint, (2)~replace $a,\adag$ by $\adag,-a$
(an~automorphism of~$\mathcal{W}$, entailing the interchange of
$\normord{\dots}_N$ and $\normord{\dots}_A$), and (3)~negate~$\lambda$
and interchange~$L,R$.
\end{remarkaftertheorem}

The $R=0$ case of~(\ref{eq:fullgen}) was previously derived by
Lang~\cite{Lang2000,Lang2007} and B{\l}asiak
et~al.\ \cite{Blasiak2005}.  They used a representation
of~$\mathcal{W}$ in which $a,\adag$ are represented by~$D_x,x$, and
exponentiated $\adag^L a$ by integrating the vector field~$x^LD_x$.
Also, Penson et~al.~\cite{Penson2009} obtained what amounted to the
$R=1$ case of~(\ref{eq:fullgen}) by exploiting a Dobi{\'{n}}ski
representation of generalized Bell polynomials.

More generally, B{\l}asiak et~al.\ \cite{Blasiak2005} integrated
$q(x)D_x+v(x)$, where $q,v\in\mathbb{C}[[x]]$.  As $\adag^L a\adag^R$
equals $(\adag)^{L+R}a +\nobreak R(\adag)^{L+R-1}$, by substituting
$q(x)=x^{L+R}$ and $v(x)=Rx^{L+R-1}$ into their formulae one can
derive (\ref{eq:fullgen}) in full generality.

B{\l}asiak et~al.\ \cite{Blasiak2006} also used Sheffer polynomials to
normal order an exponentiated version of the operator~$\tilde M$
mentioned above at the end of~\S\ref{sec:Sheffer}.  Up~to the taking
of an adjoint, they derived a formula for ${\rm e}^{\lambda\tilde
  M^\dag}$: they showed that
\begin{equation}
\label{eq:blasiaksheffer}
\begin{split}
&  \exp\left\{
  \lambda\,
  \frac1{f'(\adag)} \left[
    a - \frac{g'(\adag)}{g(\adag)}
    \right]
  \right\}
\\
&  \quad =
  \normord{
    \frac{g(\adag)}{g\left(\bar f(f(\adag)+\lambda)\right)}
    \exp{\Bigl\{\left[\bar f(f(\adag)+\lambda)-\adag\right]a
    \Bigr\}}}_N,
\end{split}
\end{equation}
where $g,f\in\mathbb{C}[[D]]$ are any pair that defines a Sheffer
sequence.  Equation~(\ref{eq:blasiaksheffer}) was not obtained in the
power-series framework of the present paper, as its derivation
employed a Fock space representation and projections onto coherent
states, but it is a valid identity
in~$\mathbb{C}[[\adag]][a][[\lambda]]$.

By examination, \emph{formally} substituting $g(D)=D^{-R}$,
$f(D)=-D^{-e}/e$ into~(\ref{eq:blasiaksheffer})
yields~(\ref{eq:fullgen}).  But this additional derivation of
eq.~(\ref{eq:fullgen}) is not rigorous, as these $g,f$ contain
negative powers of~$D$ and do not belong to~$\mathbb{C}[[D]]$.
At~most, it is an analytic continuation of a rigorous derivation.

\section{A new parametric family and the main result}
\label{sec:main}

The main result (Theorem~\ref{thm:main}) can now be approached.
According to Theorem~\ref{thm:identities34}, the normally ordered
($s=\nobreak-1$) and anti-normally ordered ($s=\nobreak+1$) representations of the
exponential operator ${\rm e}^{\lambda(\adag^L a\adag^R)}$ can be
expressed in~terms of $\mathcal{HS}(t,z;\allowbreak A,B,r)$, the EGF
of the parametric family of exponential Riordan arrays of Hsu--Shiue
type, for which there is an explicit expression.  (Recall
Proposition~\ref{prop:HSEGF}.)  For any $s\in\mathbb{C}$, the
$s$\nobreakdash-ordered representation of ${\rm e}^{\lambda(\adag^L
  a\adag^R)}$ will now be obtained in~terms of a new family of
exponential Riordan arrays, or family of Sheffer polynomial sequences,
which is of independent interest.

\subsection{The new parametric family}

Any member of the Hsu--Shiue family can be viewed as the
$n$\nobreakdash-indexed sequence of row polynomials coming from an
exponential Riordan array $\mathbf{R}[d_\textit{HS}(z),\allowbreak
  h_\textit{HS}(z)]$, for known elementary functions
$d_\textit{HS}(z)$, $h_\textit{HS}(z)$, and its EGF
$\mathcal{HS}(t,z)$ equals $d_\textit{HS}(z) {\rm e}^{t
  h_\textit{HS}(z)}$.  Equivalently, the member is viewed in Sheffer
terms as $\mathbf{S}[g_\textit{HS}(D),\allowbreak f_\textit{HS}(D)]$,
where $[g_\textit{HS},f_\textit{HS}]$ is the inverse in the
exponential Riordan group~$\mathfrak{R}$ of the element
$[d_\textit{HS},h_\textit{HS}]$.  The function $D=h_\textit{HS}(z)$
has compositional inverse $z=f_\textit{HS}(D)$, and $g_\textit{HS}(D)$
equals $1/d_\textit{HS}(z) = 1/d_\textit{HS}(f_\textit{HS}(D))$.

As noted in Sec.~\ref{subsec:HS}, the Sheffer functions
$g_\textit{HS},f_\textit{HS}$ turn~out to be minor variations of the
Riordan functions $d_\textit{HS},h_\textit{HS}$: their parameters
$A,B$ are interchanged and $r$~is negated.  This is a special property
of the Hsu--Shiue family.

When $s\neq \pm1$ and $L+R>1$, the exponential Riordan EGF in~terms of
which the $s$\nobreakdash-ordered representation of ${\rm
  e}^{\lambda(\adag^L a\adag^R)}$ is expressed in
Theorem~\ref{thm:main} will come from a Riordan pair $[d,h]$ in which
$d(z),\allowbreak h(z)$ are not elementary functions of~$z$.  But the
functions in the inverse pair $[g,f]=[d,h]^{-1}$ \emph{will} be
elementary, and one can view $[d,h]$ as $[\bar g,\bar f] \defeq
[g,f]^{-1}$, if necessary, and write the EGF as $\bar g(z){\rm
  e}^{t\bar f(z)}$.  So the sequence of polynomials from which the EGF
is computed can optionally be denoted more briefly and comprehensibly
as $\mathbf{S}[g,f]$ (in~Sheffer terms), than as $\mathbf{R}[d,h]$
(in~exponential Riordan terms).

The following defines a new family of Sheffer polynomial sequences
with this feature, which will appear in Theorem~\ref{thm:main}.  It
extends the Hsu--Shiue family, having the extra parameter~$s$ and the
pair $r,r'$ instead of~$r$.  Why it is called the ``two-point''
Hsu--Shiue family will be clear from Proposition~\ref{prop:arsenal}.

Firstly, define parametric functions $g_T,f_T$ of $D$, or more accurately
parametric Maclaurin series $g_T,f_T\in\mathbb{C}[[D]]$, by
\begin{sizeequation}{\small}
\label{eq:theformulae}
\begin{aligned}
&g_T(D;B,r,r';\,s) \defeq
\left(\frac{1+s}2\right) \left[ \frac{1+(\frac{1-s}2) B\,D}{1-(\frac{1+s}2) B\,D} \right]^{-r/B}\!\!+
\left(\frac{1-s}2\right) \left[ \frac{1-(\frac{1+s}2) B\,D}{1+(\frac{1-s}2) B\,D} \right]^{r'/B},
\\[\jot]
&z=f_T(D;A,B;\,s) \defeq \frac{[1-(\frac{1+s}2) B\,D ]^{-A/B} -1}A + \frac{[1+(\frac{1-s}2) B\,D]^{-A/B}-1}{-A},
\end{aligned}
\end{sizeequation}
with the cases $A=0$, $B=0$ handled by taking limits.  These $g_T,f_T$
have Maclaurin expansions that are series in $\mathbb{C}[[D]]$ of
order $0$ and~$1$ respectively: $g_T=1\cdot D^0 +\nobreak O(D^1)$ and
$f_T=1\cdot D^1 +\nobreak O(D^2)$.

\begin{definition}
\label{def:new}
  The parametric sequence $T_n(t)=T_n(t;\,A,B,r,r';\,s)$ of two-point
  Hsu--Shiue polynomials, denoted by $\mathcal{T}(A,B,r,r';\,s)$, is
  the Sheffer sequence
  $\mathbf{S}[g_T(\cdot;B,r,r';s),\allowbreak f_T(\cdot;A,B;s)]$, with the EGF
  \begin{displaymath}
    \mathcal{T}(t,z;\,A,B,r,r';\,s)
    =\sum_{n=0}^\infty T_n(t;\,A,B,r,r';\,s)  \frac{z^n}{n!} = \bar g_T(z;\,s){\rm e}^{t\bar f_T(z;\,s)}.
  \end{displaymath}
  Here, the pair $[\bar g_T,\bar f_T]$ is the inverse in the
  exponential Riordan group~$\mathfrak{R}$ of the parametric element
  $[g_T,f_T]$ defined in eq.~(\ref{eq:theformulae}).  That is, $D=\bar
  f_T(z;s)$ is inverse to $z=f_T(D;s)$ and can be obtained by series
  reversion, and $\bar g_T=1/g_T(D;s)=1/g_T(\bar f_T(z;s))$.  The
  notations $\mathbf{S}[g_T,f_T]$ and $\mathbf{R}[\bar g_T,\bar f_T]$
  are equivalent.
\end{definition}

It should be clear from the definition of $f_T$
in~(\ref{eq:theformulae}) that for \emph{general}~$s$ and $A,B$, the
inverse $D=\bar f_T(z)$ of $z=f_T(D)$ is not an elementary function.
But:
\begin{proposition}
  \label{prop:arsenal}
  The sequence $\mathcal{T}(A,B,r,r';s)$ of two-point Hsu--Shiue
  polynomials reduces as follows when\/ $s=\pm1$:
  \begin{align*}
    &
    \begin{aligned}
      &{T}_n(t;A,B,r,r';\,-1)
      \\
      &\quad= \mathbf{S}[d_{\textit{HS}}(\cdot;B,-r'),h_{\textit{HS}}(\cdot;B,-A)]_n(t) = \mathbf{R}[d_{\textit{HS}}(\cdot;-A,r'),h_{\textit{HS}}(\cdot;-A,B)]_n(t),
    \end{aligned}
    \\
    \shortintertext{and}
    &
    \begin{aligned}
      &{T}_n(t;A,B,r,r';\,+1)
      \\
      &\quad= \mathbf{S}[d_{\textit{HS}}(\cdot;-B,-r),h_{\textit{HS}}(\cdot;-B,A)]_n(t) = \mathbf{R}[d_{\textit{HS}}(\cdot;A,r),h_{\textit{HS}}(\cdot;A,-B)]_n(t).
    \end{aligned}
  \end{align*}
  That is, individual polynomials reduce according to
  \begin{align*}
    &{T}_n(t;\,A,B,r,r';\,-1) =  {HS}_n(t;\,-A,B,r'),\\
    &{T}_n(t;\,A,B,r,r';\,+1) =  {HS}_n(t;\,A,-B,r),
  \end{align*}
  and their EGF's reduce according to
  \begin{align*}
    \mathcal{T}(t,z;\,A,B,r,r';\,-1) &=  \mathcal{HS}(t,z;\,-A,B,r')
    \\
    &= (1-Az)^{-r'/A} \exp\left\{\frac{t}{B}\left[(1-Az)^{-B/A} - 1\right]\right\}
    \\
    \shortintertext{and}
    \mathcal{T}(t,z;\,A,B,r,r';\,+1) &=  \mathcal{HS}(t,z;\,A,-B,r)
    \\
    &= (1+Az)^{r/A} \exp\left\{\frac{t}{B}\left[1-(1+Az)^{-B/A}\right]\right\}.
  \end{align*}
\end{proposition}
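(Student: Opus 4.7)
The plan is to prove this by direct substitution into the defining formulas (\ref{eq:theformulae}) for $g_T$ and $f_T$, showing that at the two boundary values $s=\pm1$ the ``two-point'' expressions collapse to a single Hsu--Shiue factor, which can then be matched against the Sheffer representation of the Hsu--Shiue family that was recorded in Sec.~\ref{subsec:HS}.

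First I would set $s=-1$, which makes $(1+s)/2=0$ and $(1-s)/2=1$. Then the first summand of $g_T$ vanishes, leaving $g_T(D;B,r,r';-1) = (1+BD)^{-r'/B}$; likewise the first summand of $f_T$ vanishes, leaving $f_T(D;A,B;-1) = \bigl[(1+BD)^{-A/B}-1\bigr]/(-A)$. By the definitions in~(\ref{eq:dhformulae}) these are exactly $d_{\mathit{HS}}(D;B,-r')$ and $h_{\mathit{HS}}(D;B,-A)$, so the Sheffer pair $[g_T,f_T]$ collapses to $[d_{\mathit{HS}}(\cdot;B,-r'),h_{\mathit{HS}}(\cdot;B,-A)]$. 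Combined with the identity
\begin{equation*}
\mathcal{HS}(A,B,r)=\mathbf{S}[d_{\mathit{HS}}(\cdot;B,-r),h_{\mathit{HS}}(\cdot;B,A)]
\end{equation*}
from Sec.~\ref{subsec:HS}, reading off the parameters gives $T_n(t;A,B,r,r';-1)=\mathit{HS}_n(t;-A,B,r')$; the equivalent Riordan form follows from the duality relation between $[g_{\mathit{HS}},f_{\mathit{HS}}]$ and $[d_{\mathit{HS}},h_{\mathit{HS}}]$ that was already spelled out. Finally, substituting $(-A,B,r')$ into the closed-form EGF of Proposition~\ref{prop:HSEGF} yields the displayed formula for $\mathcal{T}(t,z;A,B,r,r';-1)$.

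The case $s=+1$ is symmetric: now $(1+s)/2=1$ and $(1-s)/2=0$, so the second summands vanish and I obtain $g_T(D;B,r,r';+1)=(1-BD)^{r/B}=d_{\mathit{HS}}(D;-B,-r)$ and $f_T(D;A,B;+1)=\bigl[(1-BD)^{-A/B}-1\bigr]/A=h_{\mathit{HS}}(D;-B,A)$. Matching against the Hsu--Shiue Sheffer identity with parameters $(A,-B,r)$ then gives $T_n(t;A,B,r,r';+1)=\mathit{HS}_n(t;A,-B,r)$, and Proposition~\ref{prop:HSEGF} again furnishes the explicit EGF.

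There is no real obstacle here; the proof is essentially a bookkeeping check, and the only point demanding care is keeping the sign conventions straight, since each of the three Hsu--Shiue slots is affected: $A$ is negated in the $s=-1$ reduction, $B$ is negated in the $s=+1$ reduction, and $r$ versus $r'$ is selected accordingly. The substantive content of the proposition is less its proof than the fact that, by construction, both limiting cases of the new family $\mathcal{T}(A,B,r,r';s)$ sit inside the Hsu--Shiue family on~the~nose; this is exactly what justifies calling $\mathcal{T}$ a two-point deformation interpolating between the Hsu--Shiue arrays that appear in the $N$- and $A$-ordered representations of Theorem~\ref{thm:identities34}.
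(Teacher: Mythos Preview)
Your proof is correct and follows exactly the approach of the paper: substitute $s=\pm1$ into the defining formulae~(\ref{eq:theformulae}) for $g_T,f_T$, identify the collapsed expressions with $d_{\mathit{HS}},h_{\mathit{HS}}$ from~(\ref{eq:dhformulae}), and read off the Hsu--Shiue parameters via the Sheffer representation in Sec.~\ref{subsec:HS}. The paper's own proof is even terser, but the content is identical.
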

\begin{proof}
  Set $s=-1,+1$ in the formulae (\ref{eq:theformulae}) for $g_T,f_T$,
  and compare them with the formulae (\ref{eq:dhformulae}) for
  $d_\textit{HS},h_\textit{HS}$.  The formula for the parametric
  Hsu--Shiue EGF $\mathcal{HS}(t,z)$ was supplied in
  Proposition~\ref{prop:HSEGF}.
\end{proof}

So, exponential Riordan arrays of the newly defined two-point
Hsu--Shiue type (and likewise their row polynomials and EGF's)
\emph{interpolate} as the parameter $s$ varies over the
$1$\nobreakdash-simplex ${-1\le s\le+1}$ between a pair of
conventional Hsu--Shiue ones, to which they reduce when $s=\pm1$.  One
could also define three-point interpolation, with a parameter vector
that varies over a $2$\nobreakdash-simplex, etc.

\subsection{Main result}

The following theorem states how the $s$-ordered representation of
${\rm e}^{\lambda(\adag^L a \adag^R)}$ can be expressed in~terms of
the two-point Hsu--Shiue EGF $\mathcal{T}(t,z;A,B,r,r';s)$, and
subsumes Theorem~\ref{thm:identities34}, which provided only normal
($s=\nobreak-1$) and anti-normal ($s=\nobreak+1$) orderings.  Examples
are given in Sec.~\ref{sec:examples}.  The corollary to this theorem
(Corollary~\ref{cor:main}) expresses $(\adag^L a \adag^R)^n$ in~terms
of the row polynomials $T_n(t;\allowbreak A,B,r,r';s)$, and subsumes
Theorem~\ref{thm:extract}.

\begin{theorem}
  \label{thm:main}
  For all integer $L,R\ge0$ with $e=L+R-1\ge0$, $G={\rm
    e}^{\lambda(\adag^L a \adag^R)}$ has for all\/ $s\in\mathbb{C}$ the
  $s$\nobreakdash-ordered representation
  \begin{align}
    \label{eq:mainresult}
         {\rm e}^{\lambda(\adag^L a \adag^R)} &= \normord{\mathcal{T}(\adag
      a,\lambda \adag^e;\, e,1,-L,R;\, s)}_s
    \\
    &= \normord{\bar g_T(\lambda \adag^e;\,s) \exp\left[(\adag a)\bar f_T(\lambda \adag^e;\,s)  \right]}_s,
  \end{align}
  where $[\bar g_T,\bar f_T]=[\bar g_T(z;s),\bar f_T(z;s)]$ is the Riordan-group inverse of\/
  $[g_T,f_T]$, defined as in eq.\ {\rm (\ref{eq:theformulae})},
  with parameters $(A,B,r,r';s) = (e,1,-L,R;s)$.  This reduces when
  $s=\pm1$ to
    \begin{alignat*}{2}
      &\normord{(1-e\lambda\adag^e)^{-R/e} \exp\left\{ \left[\bigl(1-e\lambda\adag^e\bigr)^{-1/e}-1\right] \adag a\right\}}_N &\qquad & \textrm{if }s=-1
      \\
      \shortintertext{and}
      &\normord{\exp\left\{ - a\adag \left[\bigl(1+e\lambda\adag^e\bigr)^{-1/e}-1\right] \right\}   (1+e\lambda\adag^e)^{-L/e} }_A &\qquad& \textrm{if }s=+1,   
    \end{alignat*}
  where in both, the $e=0$ case is handled by taking the formal
  $e\to0$ limit.
\end{theorem}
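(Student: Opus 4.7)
The plan is to use the heat equation in $s$ (eq.~(\ref{eq:unexponentiated})) to interpolate between the known $s=\pm1$ endpoints. Define
\begin{equation*}
F^{(s)}(\alpha,\alpha^*) \defeq \mathcal{T}\bigl(\alpha^*\alpha,\lambda(\alpha^*)^e;\;e,1,-L,R;\,s\bigr)
\end{equation*}
as the candidate $s$-ordered representation of ${\rm e}^{\lambda(\adag^L a \adag^R)}$. It suffices to check that (a)~$F^{(s)}$ reduces to the correct $N$-transform at $s=-1$, (b)~$F^{(s)}$ satisfies the heat equation $-\partial F/\partial s = \tfrac12\,\partial^2 F/\partial\alpha\,\partial\alpha^*$ as an identity in $\mathcal{W}_c[[\lambda]]$, and (c)~the solution is uniquely pinned down by the initial data at $s=-1$.

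Part~(a) is immediate from Proposition~\ref{prop:arsenal}: setting $(A,B,r,r')=(e,1,-L,R)$ collapses $\mathcal{T}$ to $\mathcal{HS}(\cdot;-e,1,R)$ at $s=-1$, which by Theorem~\ref{thm:identities34} is the $N$-transform of ${\rm e}^{\lambda(\adag^L a\adag^R)}$. The analogous collapse at $s=+1$ to $\mathcal{HS}(\cdot;e,-1,-L)$ yields the $A$-transform, a useful consistency check. Part~(c) is short: extracting the coefficient of $\lambda^n$ reduces the heat equation to a first-order linear ODE in $s$ for a polynomial in $\alpha,\alpha^*$ of bounded degree, so its solution is determined by its value at $s=-1$.

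The substantive work is part~(b). With $t=\alpha^*\alpha$ and $z=\lambda(\alpha^*)^e$, the chain rule gives
\begin{equation*}
\frac{\partial^2 F}{\partial\alpha\,\partial\alpha^*} = \Bigl(\partial_t + t\,\partial^2_{tt} + e z\,\partial^2_{tz}\Bigr)F,
\end{equation*}
so the required PDE takes the form $-\partial\mathcal{T}/\partial s = \tfrac12(\partial_t + t\,\partial^2_{tt} + ez\,\partial^2_{tz})\mathcal{T}$ in the three variables $(t,z,s)$. Writing $\mathcal{T} = \bar g_T(z;s)\exp[t\bar f_T(z;s)]$ and matching powers of $t$, the PDE decouples into an identity for $D \defeq \bar f_T(z;s)$ alone (from the $t^1$ piece) and one for $\bar g_T$ involving $D$ (from the $t^0$ piece). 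For the $t^1$ identity $-\partial_s D = \tfrac12\,D(D + e z\,\partial_z D)$, implicit differentiation of $z = f_T(D;e,1;s)$, which is rational in $D$ for integer $e$, produces $\partial_s D$ and $\partial_z D$ as ratios sharing a common denominator; substitution and a telescoping cancellation of the terms $(1-u_+D)^{-(e+1)}$ and $(1+u_-D)^{-(e+1)}$, where $u_\pm = (1\pm s)/2$, produces the equality.

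The main obstacle is the remaining $t^0$ identity for $\bar g_T$: unlike $f_T$, the function $g_T$ in~(\ref{eq:theformulae}) is a genuine two-term expression in which the parameters $r=-L$ and $r'=R$ enter asymmetrically, so the cancellation required to match $-\partial_s \bar g_T/\bar g_T$ against $\tfrac12[D + ez(D\,\partial_z\bar g_T/\bar g_T + \partial_z D)]$ is algebraically more involved. Carrying it out requires combining the implicit $z$- and $s$-derivatives of both terms of $g_T$ with the already-established identities for~$D$, and is where most of the computational effort lies.
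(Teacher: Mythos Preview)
Your approach is correct and tracks the paper's proof closely: both pass to variables $(t,z)=(\alpha^*\alpha,\lambda{\alpha^*}^e)$, obtain the same operator $\mathcal{X}=\tfrac12(\partial_t+t\partial_{tt}^2+ez\partial_{tz}^2)$, make the Sheffer Ansatz $\bar g_T\,{\rm e}^{t\bar f_T}$, and split into $t^0$ and $t^1$ equations. The one substantive difference is in how that pair of PDEs is handled. You propose to \emph{verify} by implicit differentiation of $z=f_T(D;s)$ that the formulas~(\ref{eq:theformulae}) satisfy the nonlinear equations $2\bar f_s+ez\bar f\bar f_z+\bar f^2=0$, etc. The paper instead changes independent variables from $(z,s)$ to $(D,s)$ and dependent variables from $(\bar g,\bar f)$ to $(g,f)$, which \emph{linearizes} the pair to $-2f_s+D^2f_D+eDf=0$ and a companion equation for~$g$; these are then solved by the method of characteristics, with the two boundary conditions at $s=\pm1$ fixing the arbitrary functions of the similarity variable $\eta=s-2/D$. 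Your implicit differentiation is algebraically the same move---multiplying your $t^1$ identity through by $\partial_D f_T$ reproduces exactly the paper's linear $f$-equation---so the content is equivalent. The paper's framing \emph{derives} the formulas for $g_T,f_T$ rather than presupposing them, and the characteristics solution dispatches the ``algebraically more involved'' $g$-equation you flag with no additional labor; your verification framing is logically tidy given that the theorem statement already hands you $g_T,f_T$, and your uniqueness argument via the order-by-order ODE in~$s$ is a clean way to close the loop that the paper leaves implicit.
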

\begin{proof}
  The idea is this: $G={\rm e}^{\lambda(\adag^L a \adag^R)}$ equals
  $\normord{F^{(s)}(\alpha,\alpha^*)}_s$, where
  $F^{(s)}(\alpha,\alpha^*)$ is some function that satisfies the
  diffusion equation~(\ref{eq:unexponentiated}), a~PDE that evolves
  $F^{(s)}$ toward smaller values of~$s$.  (In~this proof, for clarity
  the commuting arguments of $F^{(s)}$ will be written as
  $\alpha,\alpha^*$ and not as~$a,\adag$.  The notational convention
  in quantum optics that dependence on~$\alpha,\alpha^*$ really means
  dependence on a single complex variable
  $\alpha=\allowbreak\alpha_r+\nobreak {\rm i}\alpha_i$ will
  briefly be followed.)

  In~principle, one could start with $F^{(+1)}=F^{(A)}$ and integrate
  the diffusion equation backward in~$s$ toward $F^{(-1)}=F^{(N)}$, as
  one does in quantum optics when smoothing the Glauber--Sudarshan
  $P$\nobreakdash-function ($s=\nobreak+1$) to obtain the Husimi
  $Q$\nobreakdash-function ($s=\nobreak-1$) \cite{Schleich2001,Leonhardt97}.
  The differential operator
  $\partial^2/\partial\alpha\partial\alpha^*$ in the PDE is to be
  interpreted as~$\Delta/4$, with
  $\Delta=\partial^2/\partial\alpha_r^2 +
  \partial^2/\partial\alpha_i^2$ the Laplacian on the complex
  $\alpha$\nobreakdash-plane.

  However, to avoid convergence questions the exponential operator~$G$
  is viewed in this paper as an element of~$\mathcal{W}[[\lambda]]$,
  and $F^{(s)}$~is accordingly an element
  of~$\mathcal{W}_c[[\lambda]]$.  So $F^{(s)}$~is not a single
  function on the complex $\alpha$\nobreakdash-plane (i.e., a~function
  of $\alpha_r,\alpha_i$ or informally of~$\alpha,\alpha^*$), and the
  convolution or smoothing formula~(\ref{eq:smoothing}) cannot easily
  be used.  Instead, $F^{(s)}$ at each value of~$s$ is a power series
  in~$\lambda$, with coefficients that are polynomial
  in~$\alpha,\alpha^*$, and the series is formal: it is never summed.
  The following proof should be read with this in~mind.

  The statement of the theorem incorporates the $s=\pm1$ expressions
  for $F^{(s)}$ from Theorem~\ref{thm:identities34}.  To match their
  functional form, make the Ansatz
  \begin{equation}
    F^{(s)}(\alpha,\alpha^*) = {V}(\alpha^*\alpha, \lambda{\alpha^*}^e;\,s),
  \end{equation}
  where $\lambda\neq0$ and the function $V(t,z;s)$ is to be
  determined.  Changing the independent variables $\alpha,\alpha^*$
  (really $\alpha_r,\alpha_i$) to $t=\alpha^*\alpha$,
  $z=\lambda{\alpha^*}^e$ converts the diffusion
  equation~(\ref{eq:unexponentiated}) to an equation of diffusion type
  on the $(t,z)$\nobreakdash-plane,
  \begin{equation}
    \label{eq:Veqn}
    \left[\frac{\partial}{\partial s} + \mathcal{X}\right]V(t,z;\,s)=0,\qquad
    \mathcal{X} \defeq \frac12  \left(t\frac{\partial^2}{\partial t^2} + \frac{\partial}{\partial t} + ez\,\frac{\partial^2}{\partial t\,\partial z}\right),
  \end{equation}
  irrespective of~$\lambda$; so that formally, 
  \begin{equation}
    \label{eq:integrated}
    V(\cdot,\cdot;\,s) = \exp\left[(s_0-s)\mathcal{X}\right]\, V(\cdot,\cdot;\,s_0), \qquad s<s_0.
  \end{equation}
  There is an obvious
  $z$\nobreakdash-independent solution, namely
  \begin{equation}
    V(t,z;\,s) \propto (s_0-s)^{-1} {\rm e}^{-2t/(s_0-s)},\qquad s<s_0,
  \end{equation}
  for any real~$s_0$.  Depending on $t=\alpha^*\alpha$ but not on
  $z=\lambda{\alpha^*}^e$, for $s_0=+1$ this is in~fact the
  $s$\nobreakdash-ordered quasiprobability density on the complex
  $\alpha$\nobreakdash-plane coming from the density operator of the
  vacuum state~\cite[\S12.4.4]{Schleich2001}, showing how the
  $P$\nobreakdash-function ($s=\nobreak+1$), which is
  a~$\delta$\nobreakdash-function located at $\alpha=0$, broadens into
  a Gaussian as $s$~is decreased and the $Q$\nobreakdash-function
  ($s=\nobreak-1$) is approached.  But it is qualitatively different
  from the expressions $V(t=\alpha^*\alpha,\allowbreak
  z=\lambda{\alpha^*}^e;\allowbreak s=\pm1)$ appearing in
  Theorem~\ref{thm:identities34}.

  To incorporate those functions, make the further Riordan/Sheffer
  Ansatz
  \begin{equation}
    V(t,z;\,s)=\mathbf{R}[\bar g(\cdot;s),\bar f(\cdot;s)](t,z)=\mathbf{S}[g(\cdot;s),f(\cdot;s)](t,z),
  \end{equation}
  which says that
  \begin{equation}
    \label{eq:tobesubst}
    V(t,z;\,s) = \bar g(z;\,s)\,{\rm e}^{t\bar f(z;\,s)},
  \end{equation}
  where as usual, $\bar g(z;s)=1\cdot z^0+O(z^1)$ and $D=\bar f(z;s) =
  1\cdot z^1 + O(z^2)$, with $g(D;s)=1\cdot D^0+O(D^1)$ and $z=f(D;s)
  = 1\cdot D^1 + O(D^2)$.  Substituting (\ref{eq:tobesubst})
  into~(\ref{eq:Veqn}) and separating the terms proportional to $t^0$
  and~$t ^ 1$ gives
  \begin{subequations}
    \label{eq:pairfgbar}
    \begin{align}
      & 2\bar g_s + ez\bar f\bar g_z + (\bar f + ez\bar f_z) g = 0,\\
      &2\bar f_s + ez\bar f \bar f_z + \bar f ^2 = 0,
    \end{align}
  \end{subequations}
  where subscripts denote partial derivatives.  These are coupled
  nonlinear PDEs for $\bar g(z;s)$, $\bar f(z;s)$.  The two-point
  boundary conditions
  \begin{subequations}
    \begin{alignat}{2}
    \label{eqs:gbarbc}
      &\bar g(z;\, s=-1) = (1-ez)^{-R / e} ,&\qquad&  \bar g(z;\, s=+1) = (1+ez) ^ {-L/e},
    \\
    \label{eqs:fbarbc}
      &\bar f(z;\, s=-1) = (1-ez) ^ {-1/ e} - 1,&\qquad& \bar f(z;\, s=+1) = 1 - ( 1+  ez ) ^ {-1/ e}
    \end{alignat}
  \end{subequations}
  come from Theorem~\ref{thm:identities34}.

  The function (or more accurately formal infinite series) $z=f(D;s)$
  is the inverse of $D=\bar f(z;s)$, computed by series reversion.
  With its aid, the independent variables $z,s$
  in~(\ref{eq:pairfgbar}) can be replaced by~$D,s$, and the dependent
  variables $\bar g,\bar f$ by $g,f$, where $g= 1 / ( \bar g \circ
  f)$, i.e., $\bar g = 1/(g \circ \bar f)$.  This yields the PDEs
  \begin{subequations}
    \label{eq:dualpdes}
    \begin{align}
      & -2f_D\, g_s + D ^ 2 f_D\, g_D + (Df_D + ef) g = 0,\label{eq:dualgeqn}
      \\
      &-2 f_s + D^ 2f_D + eDf = 0, \label{eq:dualfeqn}
    \end{align}
  \end{subequations}
  which are satisfied by $f(D;s)$ and $g(D;s)$.  The two-point
  boundary conditions on these dual functions follow from
  (\ref{eqs:gbarbc}),(\ref{eqs:fbarbc}) and are
  \begin{subequations}
    \begin{sizealignat}{\small}{2}
      &g(D;\, s=-1) = (1+D)^{-R}, &\quad& g(D;\, s=+1) = (1-D)^{-L},     
      \label{eqs:gbcs}
      \\
      &f(D;\, s=-1) = \left[1-(1+D)^ {-e}\right] /e, &\quad& f(D;\, s=+1) = \left[(1-D) ^ {-e} - 1\right] / e. 
      \label{eqs:fbcs}
    \end{sizealignat}
  \end{subequations}
  The ``dualized'' PDEs (\ref{eq:dualpdes}) are linear and of first order, and
  can be solved by Lagrange's method of characteristics.

  The solution of (\ref{eq:dualfeqn}) is
  \begin{equation}
    z = f(D;\, s) = D ^ {-e} F(\eta ),
  \end{equation}
  a similarity solution in which $\eta \defeq s-2/D$ is the similarity
  variable and the function $F(\eta )$ is arbitrary.  To match the
  boundary conditions~(\ref{eqs:fbcs}), choose
  \begin{gather}
    F(\eta ) = \frac{\left(\frac{-1-\eta }2\right)^{ -e}  -\left(\frac{1-\eta }{2}\right)^{-e}}e,
    \\
    \shortintertext{so that}
    \label{eq:finalf}
    z=f(D;\,s) = 
    \frac{ \left[1-(\frac{1+s}{2})D\right]^{-e} 
      -  {\left[1+(\frac{1-s}2)D\right]^{-e}}}e,
  \end{gather}
  with expansion $f(D;s)=1\cdot D^1 + O(D^2)$.  With this choice of
  $f(D;\,s)$, the PDE~(\ref{eq:dualgeqn}) is solved in the same way.
  It has the similarity solution
  \begin{equation}
    g(D;\,s) = 
    \left[
      \left(\frac{1+s}2\right) + \left(\frac{1-s}2\right) \left( \frac{\eta +1}{\eta -1}\right)^{e+1}
    \right]G(\eta ),
  \end{equation}
  where the function $G(\eta )$ is arbitrary.
To match the boundary conditions~(\ref{eqs:gbcs}), choose
\begin{equation}
  G(\eta) = \left( \frac{\eta-1}{\eta+1} \right)^L,
\end{equation}
so that
\begin{equation}
\label{eq:finalg}
  g(D;\,s)=
\left(\frac{1+s}2\right) \left[ \frac{1+(\frac{1-s}2) D}{1-(\frac{1+s}2) D} \right]^{L}+
\left(\frac{1-s}2\right) \left[ \frac{1-(\frac{1+s}2) D}{1+(\frac{1-s}2) D} \right]^{R},
\end{equation}
with $g(D;s)=1\cdot D^0 + O(D^1)$.  Comparing
(\ref{eq:finalf}),(\ref{eq:finalg}) with the definitions of $f_T,g_T$
given in~(\ref{eq:theformulae}) reveals that
\begin{equation}
  f(D;\,s) = f_T(D;e,1;\,s),
  \qquad
  g(D;\,s) = g_T(D;1,-L,R;\,s).
\end{equation}
Therefore
\begin{equation}
  V(t,z;\,s) = \mathcal{T}(t,z;\, e,1,-L,R;\,s),
\end{equation}
which is the substance of the theorem.
\end{proof}

\begin{remark*}
  Wilcox~\cite{Wilcox70} solved a PDE similar to~(\ref{eq:Veqn}) by
  normal ordering an exponentiated second-order differential operator
  analogous to the $\exp\left[(s_0-\nobreak s)\mathcal{X}\right]$ in
  eq.~(\ref{eq:integrated}).  In his paper the PDE came first, but in
  the just-concluded proof of Theorem~\ref{thm:main} the normal
  ordering (actually $s$\nobreakdash-ordering) came first, and
  determined the PDE to be solved.
\end{remark*}

\begin{corollary}
\label{cor:main}
  For all integer $L,R\ge0$ with $e=L+R-1\ge0$, 
  \begin{displaymath}
    (\adag^L a \adag^R)^n = \normord{\left(\adag^e\right)^nT_n(\adag a;e,1,-L,R;s)}_s,
  \end{displaymath}
  where $T_n(t;e,1,-L,R;s)$ is the case $(A,B,r,r')=(e,1,-L,R)$ of the
  $n\!$'th parametric Sheffer polynomial\/ $\mathbf{S}[g_T,f_T]_n(t) =
  \mathbf{R}[\bar g_T,\bar f_T]_n(t)$, with $g_T(D;s)$ and
  $z=f_T(D;s)$ given in
  Definition~\ref{def:new}, i.e., in eq.\/~(\ref{eq:theformulae}).
\end{corollary}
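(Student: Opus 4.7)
The plan is to extract the coefficient of $\lambda^n/n!$ from both sides of the identity furnished by Theorem~\ref{thm:main}. On the left, the expansion ${\rm e}^{\lambda(\adag^L a \adag^R)} = \sum_{n \ge 0} (\adag^L a \adag^R)^n \lambda^n/n!$ holds in $\mathcal{W}[[\lambda]]$ by definition of the formal exponential, so the coefficient of $\lambda^n/n!$ is visibly $(\adag^L a \adag^R)^n$. On the right, Definition~\ref{def:new} supplies the EGF
\[
\mathcal{T}(t,z;\,e,1,-L,R;\,s) = \sum_{n=0}^\infty T_n(t;\,e,1,-L,R;\,s)\,\frac{z^n}{n!}
\]
as an element of $\mathbb{C}[t][[z]]$. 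Substituting $t=\adag a$ and $z=\lambda \adag^e$ is legitimate inside $\normord{\dots}_s$, where by the convention of Sec.~\ref{sec:orderings} the symbols $a,\adag$ commute and can be written as $\alpha,\alpha^*$; the substitution converts the EGF into a formal series in~$\lambda$ whose $\lambda^n/n!$ coefficient is the commutative monomial $T_n(\adag a;\,e,1,-L,R;\,s)\,(\adag^e)^n$.

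The next step is to invoke the $\mathbb{C}[[\lambda]]$-linearity of the $s$-quantization map $\normord{\dots}_s$. As noted in Sec.~\ref{sec:orderings}, $\normord{\dots}_s$ extends from $\mathcal{W}_c=\mathbb{C}[\alpha,\alpha^*]$ to $\mathcal{W}_c[[\lambda]] = \mathbb{C}[\alpha,\alpha^*][[\lambda]]$ by acting on each $\lambda^n$ coefficient individually; in particular, it commutes with extraction of $\lambda^n/n!$. Applying $\normord{\dots}_s$ term by term to the identity in Theorem~\ref{thm:main} and equating $\lambda^n/n!$ coefficients then yields the claimed equality in~$\mathcal{W}$, namely $(\adag^L a \adag^R)^n = \normord{T_n(\adag a;\,e,1,-L,R;\,s)\,(\adag^e)^n}_s$ (with the factor $(\adag^e)^n$ placed on either side of $T_n$ immaterially, since both factors sit inside $\normord{\dots}_s$).

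There is no serious obstacle: this is a direct formal-power-series coefficient extraction, and the corollary is an immediate consequence of Theorem~\ref{thm:main} once the EGF of $T_n$ is unfolded. The only item warranting a brief verification is that for each fixed~$n$ the quantity $T_n(\adag a;\ldots)(\adag^e)^n$ is a bona fide polynomial in the commuting indeterminates $\alpha,\alpha^*$ (of degree $n$ in $\alpha$ and $en+n$ in $\alpha^*$), so that $\normord{\dots}_s$ is unambiguously defined on it and produces an element of $\mathcal{W}$ with the correct total excess $e$ of $\adag$'s over~$a$'s, consistent with the left-hand side. The proof uses nothing beyond Definition~\ref{def:new} and the already-established Theorem~\ref{thm:main}.
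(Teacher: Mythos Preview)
Your approach is exactly the paper's: expand both sides of Theorem~\ref{thm:main} in~$\lambda$ and equate coefficients. The paper's own proof is a single sentence to that effect.

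One point worth flagging: the formula you actually derive,
\[
  (\adag^L a \adag^R)^n \;=\; \normord{(\adag^e)^n\, T_n(\adag a;\,e,1,-L,R;\,s)}_s,
\]
carries the prefactor $(\adag^e)^n$, whereas the corollary as \emph{stated} omits it. Your derivation is correct and the stated form is a slip: the factor must be present, since (i)~the EGF substitution $z=\lambda\adag^e$ produces exactly $(\adag^e)^n$ at order~$\lambda^n$, (ii)~the $s=\pm1$ reductions via Proposition~\ref{prop:arsenal} must recover Theorem~\ref{thm:extract}, which does contain $(\adag^e)^n$, and (iii)~the paper itself uses the corollary \emph{with} the factor $\adag^n$ in the proof of Proposition~\ref{prop:lastprop}. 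So you have the right argument and the right conclusion; just note explicitly that the displayed statement is missing the $(\adag^e)^n$ rather than calling your result ``the claimed equality.''
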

\begin{proof}
  Expand each side of the identity~(\ref{eq:mainresult}) in
  Theorem~\ref{thm:main} as a formal power series in~$\lambda$, and
  equate the coefficients of~$\lambda^n$ on the two sides.
\end{proof}

The $s=-1$ (normal ordering) case of the formula in this corollary is
an illustration of Prop.~\ref{prop:duchampetal}, the result of
Duchamp {et~al.}

The following is the Weyl ordering ($s=\nobreak0$) case of the main result
Theorem~\ref{thm:main}, which is the one that can be expressed most
simply.  The functions $\bar g(z)$, $D=\bar f(z)$ come from $g(D)$,
$z=f(D)$, which in~turn come from the two-point functions $g_T(D;s)$,
$z=f_T(D;s)$ defined in~(\ref{eq:theformulae}), by setting $s=\nobreak0$.

\begin{theorem}
  \label{thm:tired}
  For all integer $L,R\ge0$ with $e=L+R-1\ge0$, one has the Weyl
  ordering
  \begin{displaymath}
    {\rm e}^{\lambda(\adag^L a \adag^R)} = \normord{\bar g(\lambda \adag^e)\,
    \exp \left[(\adag a)\,\bar f(\lambda \adag^e)\right]}_W,
  \end{displaymath}
  an identity in\/ $\mathcal{W}[[\lambda]]$ in which $D=\bar f(z)$ is
  the compositional inverse of
  \begin{displaymath}
    z=f(D)=\frac{(1-\frac12D)^{-e} - 1}{e} + \frac{(1+\frac12D)^{-e}-1}{-e}
  \end{displaymath}
  with expansion $D=\bar f(z)=1\cdot z^1 + O(z^2)$, and $\bar
  g(z)=1/g(D)=1/g(\bar f(z))$ where
  \begin{displaymath}
    g(D) = \frac12
    \left[
      \frac{1+\frac12D}{1-\frac12D}
      \right]^L
    \!+ \frac12
    \left[
      \frac{1-\frac12D}{1+\frac12D}
      \right]^R,
  \end{displaymath}
  with expansion\/ $\bar g(z)=1\cdot z^0 + O(z^1)$
  The case $e=0$ is handled by taking the formal\/ $e\to0$ limit.
\end{theorem}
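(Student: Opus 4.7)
The plan is to derive Theorem~\ref{thm:tired} as the immediate $s=0$ specialization of Theorem~\ref{thm:main}. As noted in Section~\ref{sec:orderings} (right after eq.~(\ref{eq:foo0})), the $s$\nobreakdash-ordering reduces to Weyl ordering at $s=0$, so $\normord{\dots}_0 = \normord{\dots}_W$. Consequently, no further analytic work beyond specializing the two-point Hsu--Shiue formulas (\ref{eq:theformulae}) is required.

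First I would fix the parameter tuple at $(A,B,r,r') = (e,1,-L,R)$, as prescribed by Theorem~\ref{thm:main}, and substitute $s=0$ into the definitions of $f_T$ and $g_T$ in (\ref{eq:theformulae}). In $f_T(D;e,1;s)$, the factors $(1\pm s)/2$ both reduce to $1/2$ and the exponent $-A/B$ becomes $-e$, recovering the $z=f(D)$ of the theorem. In $g_T(D;1,-L,R;s)$, the prefactors $(1\pm s)/2$ likewise reduce to $1/2$, while $-r/B$ and $r'/B$ become $L$ and $R$ respectively, recovering the stated $g(D)$. The companion series $D=\bar f(z) = 1\cdot z + O(z^2)$ and $\bar g(z) = 1 + O(z)$ are then extracted as in Definition~\ref{def:new} by series reversion of $z=f(D)$ and by $\bar g = 1/(g\circ \bar f)$; both operations are well-defined in~$\mathbb{C}[[z]]$ because $f$ has order $1$ and $g$ has order $0$ in~$D$.

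Substituting these specialized $\bar f,\bar g$ into the identity of Theorem~\ref{thm:main} at $s=0$ yields the claim in~$\mathcal{W}[[\lambda]]$, with the $e=0$ limit inherited from the main theorem. The only potential obstacle is to confirm that $s=0$ is not a singular value of the defining formulas, but this is immediate: the prefactors $(1\pm s)/2$ and the exponent denominators in (\ref{eq:theformulae}) are regular and nonvanishing at $s=0$, so the specialization coincides with the naive substitution and no delicate limit is needed. The symmetric appearance of $L$ and $R$ in the final $g(D)$, and the manifest antisymmetry under $D\mapsto -D$ in $f(D)$, also serve as consistency checks that the two boundary conditions (\ref{eqs:gbcs})--(\ref{eqs:fbcs}) have been interpolated correctly at the midpoint $s=0$.
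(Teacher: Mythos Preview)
Your proposal is correct and follows exactly the paper's approach: the paper presents Theorem~\ref{thm:tired} without a separate proof, simply stating that it is the $s=0$ case of Theorem~\ref{thm:main}, with $g(D)$ and $f(D)$ obtained by setting $s=0$ in the two-point functions $g_T(D;s)$, $f_T(D;s)$ of~(\ref{eq:theformulae}). Your explicit verification of the specialization (and the regularity check at $s=0$) is a faithful unpacking of that one-line derivation.
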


\section{Examples}
\label{sec:examples}

The main result (Theorem~\ref{thm:main}) will now be applied, to
produce several new $s$\nobreakdash-ordered representations of
exponential boson operators ${\rm e}^{\lambda( \adag^L
  a\adag^R)}\in\mathcal{W}[[\lambda]]$, and of $(\adag^L a
\adag^R)^n\in\nobreak\mathcal{W}$.

According to the theorem, for any $(L,R)$ one needs the EGF of a
certain family of two-point Hsu--Shiue polynomials (of~Sheffer type),
denoted by $\mathcal{T}(t,z;\allowbreak A,B,r,r';s)$.  The EGF equals
$\bar g_T(z){\rm e}^{t\bar f_T(z)}$, where $[\bar g_T,\bar f_T]$ is
the inverse in the exponential Riordan group~$\mathfrak{R}$ of the
parametric $s$\nobreakdash-dependent element $[g_T,f_T]$ defined in
eq.~(\ref{eq:theformulae}).  One simply substitutes
$(A,B,r,r')=(e,1,-L,R)$, where $e\defeq\allowbreak L+\nobreak
R-\nobreak 1$, and sets $(t,z)=(\adag a,\lambda\adag^e)$ (which really
means $(t,z)=(\alpha^* \alpha,\lambda{\alpha^*}^e)$), to obtain the
desired $s$-ordered representation.

The only possible stumbling block is the computation of the inverse
group element (the function $D=\bar f_T(z;s)$ is the compositional
inverse of $z=f_T(D;s)$).  This potential difficulty has already
appeared in the statement of Theorem~\ref{thm:tired}, without comment.
The following subsections treat the cases when $e=0,1,2$.

\subsection{The case $e=0$}
\label{subsec:e0}

This includes the operators ${\rm e}^{\lambda(\adag a)}$, ${\rm
  e}^{\lambda(a \adag)}$, with $(L,R)=(1,0)$, $(0,1)$.  Substituting
$(A,B,r,r')=\allowbreak (e,1,-L,R) =\allowbreak (0,1,-1,0)$
into~(\ref{eq:theformulae}) gives
\begin{equation}
\label{eq:82}
  g_T(D;\,s) = \frac2{2-D-sD},\qquad
  z=f_T(D;\,s) = \ln\left(\frac{2+D-sD}{2-D-sD}
  \right).
\end{equation}
(In deriving the latter, substituting $A=0$ means taking the $A\to0$
limit.)  The inverse $D=\bar f_T(z;s)$ and $\bar g_T(z;s) =
1/g_T(D=\bar f_T(z;s);s)$ become
\begin{equation}
  \bar g_T(z;\,s) = \frac2{1+{\rm e}^z - s(1-{\rm e}^z)},\qquad
  D=\bar f_T(z;\,s) = \frac{2({\rm e}^z - 1)}{1+{\rm e}^z - s(1-{\rm e}^z)}.
\end{equation}
Substituting into $\bar g_T(z;s){\rm e}^{t\bar f_T(z;s)}$ and setting
$(t,z) = (\adag a, \lambda \adag^e) =\allowbreak (\adag a,\lambda)$
yields
\begin{equation}
\label{eq:CH22}
  {\rm e}^{\lambda(\adag a)}
  =
  \normord{
  \frac{2}{1+{\rm e}^\lambda - s(1-{\rm e}^\lambda)}
        \exp \left [ \frac{ 2({\rm e}^\lambda-1)}{1+{\rm e}^\lambda-s(1-{\rm e}^\lambda)} \adag a \right] }_s,
\end{equation}
which is precisely the original Cahill--Glauber
$s$\nobreakdash-ordering result, eq.~(\ref{eq:CH2}).  The functions
$g_T,f_T$ given in~(\ref{eq:82}) appear in its inverse,
eq.~(\ref{eq:invCH2}).  The case $(L,R)=(0,1)$ is similar to the case
$(L,R)=(1,0)$.

\subsection{The case $e=1$}
\label{subsec:e1}

This includes the operators ${\rm e}^{\lambda(\adag^2 a)}$, ${\rm
  e}^{\lambda(\adag a\adag)}$, ${\rm e}^{\lambda(a\adag^2 )}$, with
$(L,R)=(2,0)$, $(1,1)$, $(0,2)$.  Substituting $(A,B)=(e,1)=(1,1)$
into~(\ref{eq:theformulae}) gives
\begin{align}
  &
  g_T(D;\,s) =
  \left\{
  \begin{alignedat}{2}
    &\frac{4+D^2-s^2D^2}{(2-D-sD)^2}, &\qquad& (L,R)=(2,0);
    \\
    &\frac{4+D^2-s^2D^2}{(2-sD)^2 - D^2}, &\qquad& (L,R)=(1,1);
    \\
    &\frac{4+D^2-s^2D^2}{(2+D-sD)^2}, &\qquad& (L,R)=(0,2);
  \end{alignedat}
  \right.
  \\
  \shortintertext{and}
  &
  z=f_T(D;\,s)= \frac{4D}{4-D^2}.
\end{align}
Hence,
\begin{align}
\label{eq:hence1}
  &
  \bar g_T(z;\,s) =
  \left\{
  \begin{alignedat}{2}
    &\frac{1+s}{1+2sz+z^2 + (z+s)\sqrt{1+2sz+z^2}}, &\qquad& (L,R)=(2,0);
    \\
    &\frac1{\sqrt{1+2sz+z^2}}, &\qquad& (L,R)=(1,1);
    \\
    &\frac{1-s}{1+2sz+z^2 - (z+s)\sqrt{1+2sz+z^2}}, &\qquad& (L,R)=(0,2);
  \end{alignedat}
  \right.
  \\
  \shortintertext{and}
  &
  D= \bar f_T(z;\,s)= \frac{2z}{1+sz+\sqrt{1+2sz+z^2}}.
  \label{eq:hence2}
\end{align}
Substituting into $\bar g_T(z;s){\rm e}^{t\bar f_T(z;s)}$, with $(t,z)
= (\adag a,\lambda\adag^e) = (\adag a,\lambda\adag)$, yields
\begin{theorem}
  \label{thm:3reps}
  One has the $s$-ordered representations
  \begin{gather}
    \begin{aligned}
      {\rm e}^{\lambda (\adag^2 a)} &=     
      \mathopen{:}\,
      \frac{1+s}{1+2s\lambda\adag+\lambda^2 \adag^2 + (\lambda\adag+s)\sqrt{1+2s\lambda\adag+\lambda^2 \adag^2}}
      \times\mathcal{E}
      \,\mathclose{:}_s,
      \\
        {\rm e}^{\lambda (\adag a\adag)} &=
        \mathopen{:}\,
        \frac1{\sqrt{1+2s\lambda\adag+\lambda^2 \adag^2}}\times
        \mathcal{E}
        \,\mathclose{:}_s,
        \\
          {\rm e}^{\lambda (a \adag^2)} &= 
          \mathopen{:}\,
          \frac{1-s}{1+2s\lambda\adag+\lambda^2 \adag^2 - (\lambda\adag+s)\sqrt{1+2s\lambda\adag+\lambda^2 \adag^2}}
          \times\mathcal{E}\,
          \mathclose{:}_s,
    \end{aligned}
    \\
    \shortintertext{where}
    \begin{aligned}
      \mathcal{E}\defeq \exp\left[(\adag a)\,\frac{2\lambda\adag}{1+s\lambda\adag+\sqrt{1+2s\lambda\adag+\lambda^2 \adag^2}}\right],
    \end{aligned}
  \end{gather}
  for all\/ $s\in\mathbb{C}$.
\end{theorem}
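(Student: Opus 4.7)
The plan is to specialize Theorem~\ref{thm:main} to $e=L+R-1=1$ and carry out the Riordan-inversion computation in each of the three cases $(L,R)\in\{(2,0),(1,1),(0,2)\}$. By the main theorem,
\[
{\rm e}^{\lambda(\adag^L a \adag^R)} = \normord{\bar g_T(\lambda\adag;s)\exp\bigl[(\adag a)\,\bar f_T(\lambda\adag;s)\bigr]}_s,
\]
where $[\bar g_T,\bar f_T]$ is the Riordan-group inverse of $[g_T,f_T]$ for the parameter choice $(A,B,r,r')=(1,1,-L,R)$. Introducing the shorthand $a=\tfrac{1+s}{2}$, $b=\tfrac{1-s}{2}$ (so $a+b=1$, $a-b=s$, $4ab=1-s^2$) and substituting into~(\ref{eq:theformulae}) gives
\[
g_T(D;s) = a\left(\tfrac{1+bD}{1-aD}\right)^{L} + b\left(\tfrac{1-aD}{1+bD}\right)^{R},\qquad f_T(D;s) = \frac{D}{(1-aD)(1+bD)},
\]
the latter being independent of $(L,R)$.

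The pivotal step is to invert $z=f_T(D;s)$. Cross-multiplying yields the quadratic $ab\,z\,D^2+(1+sz)D-z=0$, whose discriminant simplifies via the algebraic identity $(1+sz)^2+4ab\,z^2 = 1+2sz+z^2$. Selecting the branch with $D=z+O(z^2)$ and rationalizing gives
\[
D=\bar f_T(z;s)=\frac{2z}{1+sz+\sqrt{1+2sz+z^2}}.
\]
Since $f_T$ was common to all three cases, so is the exponential factor $\mathcal{E}$ in the theorem's statement, obtained by setting $z=\lambda\adag$.

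To produce the three prefactors $\bar g_T(z;s) = 1/g_T(\bar f_T(z;s);s)$, I would exploit the identity $a(1+bD)^2+b(1-aD)^2 = 1+ab\,D^2$ (immediate from $a+b=1$), which unifies the numerators when each $g_T$ is placed over a common denominator. Substituting $D=\bar f_T(z;s)$ and using $2a=1+s$, $2b=1-s$ yields the clean forms $1-aD=(1-z+\sqrt{1+2sz+z^2})/(1+sz+\sqrt{1+2sz+z^2})$ and $1+bD=(1+z+\sqrt{1+2sz+z^2})/(1+sz+\sqrt{1+2sz+z^2})$; from these $(1-aD)(1+bD)=D/z$ (confirming the quadratic) and the individual $\bar g_T$ reduce to the radical expressions of~(\ref{eq:hence1}). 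Setting $z=\lambda\adag$ then produces the three stated $s$-ordered representations.

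The main obstacle is the algebraic bookkeeping: inverting $f_T$ correctly and rationalizing to bring out the radical $\sqrt{1+2sz+z^2}$ is the crux, while the three prefactor calculations are routine given the unified numerator identity. Sanity checks at $s=\pm 1$ (where the radical collapses to $|1\pm z|$) should recover Theorem~\ref{thm:identities34}, and at $s=0$ should match Theorem~\ref{thm:tired}, providing confidence in the final answers.
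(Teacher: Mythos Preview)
Your proposal is correct and follows essentially the same route as the paper: specialize Theorem~\ref{thm:main} to $e=1$, compute $g_T,f_T$ from~(\ref{eq:theformulae}), invert $f_T$ by solving the resulting quadratic (your discriminant identity $(1+sz)^2+(1-s^2)z^2=1+2sz+z^2$ is the key step), and substitute $z=\lambda\adag$. One minor quibble: your shorthand $a=(1+s)/2$ clashes with the annihilation operator~$a$, so pick different letters (say $p,q$) when writing this up.
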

When $s=\nobreak-1$ and $s=\nobreak+1$, each of these three ordered
representations reduces to a normal ordered and an anti-normal ordered
one, which are the ones provided by Theorem~\ref{thm:identities34}.
(To set $s=\nobreak-1$ in the first or $s=\nobreak+1$ in the third, a
limit must be taken.)  But, the three Weyl-ordered representations
obtained by setting $s=\nobreak0$ seem to be new to the literature.

From such identities as those of Theorem~\ref{thm:3reps}, it is
possible to derive explicitly $s$\nobreakdash-ordered series
expansions of $(\adag^2 a)^n$, $(\adag a\adag)^n$, $(a \adag^2)^n$; at
least, when $s=-1,0,+1$.  For the case of $(\adag a\adag)^n$ when
$s=\pm1$, this has already been done: see the Laguerre-polynomial
representations in eqs.\ (\ref{eq:laguerre1}),(\ref{eq:laguerre2})
above, each term in which is explicitly normal, resp.\ anti-normal
ordered.  The expansion of $(\adag a\adag)^n$ when $s=\nobreak0$ is of
particular interest: it expresses $(\adag a\adag)^n$ as the symmetric
(Weyl) quantization of a classical function on the complex
$\alpha$\nobreakdash-plane.
\begin{proposition}
\label{prop:lastprop}
  One has the Weyl-ordered\/ {\rm($s=\nobreak0$)} representation
  \begin{displaymath}
    (\adag a\adag)^n=
    \sum_{{0\le k\le n}\atop{\text{$n-k$ even}}} (-1)^{(n-k)/2}\, 2^{-(n-k)} \binom{n}{(n-k)/2}\,\frac{n!}{k!}\,\normord{\adag^n (\adag a)^k}_W.
  \end{displaymath}
\end{proposition}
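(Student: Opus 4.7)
The plan is to specialise the middle identity of Theorem~\ref{thm:3reps} to the Weyl case $s=0$,
\begin{equation*}
e^{\lambda(\adag a\adag)}
 = \normord{(1+\lambda^2\adag^2)^{-1/2}\exp\!\left[(\adag a)\,\frac{2\lambda\adag}{1+\sqrt{1+\lambda^2\adag^2}}\right]}_W,
\end{equation*}
and to extract the coefficient of $\lambda^n/n!$ from both sides. Inside $\normord{\cdot}_W$ the symbols $\adag,a$ commute, so writing $t=\adag a$ and $z=\lambda\adag$ the expression in the Weyl bracket is the bivariate EGF $\bar g_T(z)\,e^{t\bar f_T(z)}=\sum_{n\ge0}T_n(t)z^n/n!$ of a Sheffer sequence $T_n$, whose associated Riordan pair is
\begin{equation*}
[\bar g_T,\bar f_T]=\bigl[(1+z^2)^{-1/2},\ 2z/(1+\sqrt{1+z^2})\bigr].
\end{equation*}
Matching $\lambda^n/n!$ coefficients and using $z^n=\lambda^n\adag^n$ gives
\begin{equation*}
(\adag a\adag)^n = \normord{\adag^n\,T_n(\adag a)}_W = \sum_{k=0}^n T_{n,k}\,\normord{\adag^n(\adag a)^k}_W,
\end{equation*}
so the proposition reduces to showing that the Riordan entries $T_{n,k}=(n!/k!)\,[z^n]\bar g_T(z)\bar f_T(z)^k$ have the claimed closed form.

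To evaluate these entries I would use the tangent half-angle substitution $z=\sinh\theta$, $\sigma=\tanh(\theta/2)$. Then $z=2\sigma/(1-\sigma^2)$, $dz=2(1+\sigma^2)(1-\sigma^2)^{-2}\,d\sigma$, and the Riordan functions simplify to $\bar f_T(z)=2\sigma$ and $\bar g_T(z)=(1-\sigma^2)/(1+\sigma^2)$. Writing the coefficient extraction as a contour integral and changing variable, the $(1\pm\sigma^2)$ factors telescope cleanly and leave
\begin{equation*}
[z^n]\bar g_T(z)\bar f_T(z)^k = \frac{1}{2\pi i}\oint\frac{\bar g_T(z)\bar f_T(z)^k}{z^{n+1}}\,dz = 2^{k-n}\,[\sigma^{n-k}](1-\sigma^2)^n.
\end{equation*}
Since $(1-\sigma^2)^n$ contains only even powers of $\sigma$, this vanishes unless $n-k$ is even, in which case it equals $2^{k-n}(-1)^{(n-k)/2}\binom{n}{(n-k)/2}$. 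Multiplying by $n!/k!$ recovers exactly the coefficients in the proposition.

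The delicate step is recognising the right change of variable: the Riordan functions contain nested radicals, so a direct Lagrange-type expansion would be painful, and the claimed parity restriction on $n-k$ is not visible from the $z$-series. Once the half-angle substitution is in place the computation is mechanical, and the parity restriction becomes forced by the evenness of $(1-\sigma^2)^n$ in~$\sigma$. The Riordan array $\mathbf{R}[(1+z^2)^{-1/2},\,2z/(1+\sqrt{1+z^2})]$ extracted from Theorem~\ref{thm:3reps} is the ``known Riordan array'' alluded to before the proposition, and this half-angle parameterisation is what makes its entries computable in closed form.
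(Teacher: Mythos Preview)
Your argument is correct. The first half coincides with the paper's proof: both specialise the middle identity of Theorem~\ref{thm:3reps} to $s=0$ and extract the $\lambda^n/n!$ coefficient to obtain
\[
(\adag a\adag)^n=\normord{\adag^n\,\mathbf{R}\!\left[\tfrac{1}{\sqrt{1+z^2}},\,\tfrac{2z}{1+\sqrt{1+z^2}}\right]_n(\adag a)}_W.
\]
The difference lies in how the Riordan entries are obtained. The paper does not compute them: it identifies the array (up to a rescaling $z\mapsto 2z$) with OEIS entries \oeisseqnum{A108044}/\oeisseqnum{A120616}, quotes the known ordinary-Riordan formula
$\mathbf{R}_o[\,(1+4z^2)^{-1/2},\,2z/(1+\sqrt{1+4z^2})\,]_{n,k}=(-1)^{(n-k)/2}\binom{n}{(n-k)/2}$,
and then restores the exponential factor $n!/k!$. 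Your half-angle substitution $z=2\sigma/(1-\sigma^2)$ instead \emph{derives} those entries from scratch via the contour integral, and I checked that the $(1\pm\sigma^2)$ factors do cancel exactly as you claim, leaving $2^{k-n}[\sigma^{n-k}](1-\sigma^2)^n$. So your route is genuinely self-contained and explains why the parity restriction appears, whereas the paper's is shorter but leans on an external reference. Either way the computation is the same Riordan array; your substitution is in fact a standard way to prove the OEIS formula the paper cites.
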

\begin{remark*}
  The coefficients in this expansion, unlike those in
  (\ref{eq:laguerre1}),(\ref{eq:laguerre2}), are rational rather than
  integer-valued.  
\end{remark*}
\begin{proof}
  Consider the $(L,R)=(1,1)$ case of Corollary~\ref{cor:main} when
  $s=\nobreak0$, or more concretely the $(L,R)=(1,1)$ case of
  Theorem~\ref{thm:tired}, or the second identity in
  Theorem~\ref{thm:3reps} when $s=\nobreak0$.  They all imply that
  \begin{equation}
    (\adag a\adag)^n = \normord{\adag^n\,\mathbf{R}[\bar g_T(z;\,0),\bar f_T(z;\,0)]_n(t=\adag a)}_W,
  \end{equation}
  where $\bar g_T(z;0)$ comes from the $(L,R)=(1,1)$ case
  of~(\ref{eq:hence1}), and $\bar f_T(z;0)$ comes similarly
  from~(\ref{eq:hence2}).  That is,
  \begin{equation}
    (\adag a \adag)^n = \normord{\adag^n\, \mathbf{R}\left[ \frac1{\sqrt{1+z^2}}, \frac{2z}{1+\sqrt{1+z^2}}\right]_n\!(t=\adag a)}_W.
  \end{equation}
  Up to normalization, the Riordan array $\mathbf{R}[d(z),h(z)]$
  appearing here is familiar: versions appear as \oeisseqnum{A108044}
  and \oeisseqnum{A120616} in the OEIS~\cite{OEIS2025}.  The
  expression
  \begin{equation}
    \mathbf{R}_o\left[\frac1{\sqrt{1+4z^2}}, \frac{2z}{1+\sqrt{1+4z^2}}\right]_{n,k}\!
    = (-1)^{(n-k)/2}\binom{n}{(n-k)/2}
  \end{equation}
  for its matrix elements is known (the binomial coefficient is zero
  if $n-\nobreak k$ is odd).  Here, $\mathbf{R}_o$ in distinction to
  $\mathbf{R}$ signifies an \emph{ordinary} (non-exponential) Riordan
  array, meaning that in the definition of
  $\mathbf{R}_o[d(z),h(z)]_{n,k}$ the $n!/k!$ factor in the
  definition~(\ref{eq:expRdef}) used in this paper is absent.
  Altering the normalization and restoring the $n!/k!$ factor yields
  the claimed formula for $(\adag a \adag)^n$.
\end{proof}

\subsection{The case $e=2$}
\label{subsec:e2}

This includes the operators ${\rm e}^{\lambda(\adag^3 a)}$, ${\rm
  e}^{\lambda(\adag^2 a\adag)}$, ${\rm e}^{\lambda(\adag a\adag^2)}$,
${\rm e}^{\lambda( a\adag^3)}$, with $(L,R)=(3,0)$, $(2,1)$, $(1,2)$,
$(0,3)$.  As with $e=0,1$, one substitutes $(A,B)=(e,1)=(2,1)$ and
$L,R$ into the definitions in eq.~(\ref{eq:theformulae}) of the
functions $g_T(D;s)$ and $z=f_T(D;s)$, and computes $[\bar
  g_T(z;s),\allowbreak\bar f_T(z;s)]$ by inverting $[g_T,f_T]$ in the
exponential Riordan group.  The $s$\nobreakdash-ordered representation
of the operator is then $\bar g_T(\lambda \adag^e;s)\times\allowbreak
\exp\left[(\adag a)\bar f_T(\lambda\adag^e;s)\right]$.

But for each of the four operators, the representation is quite
complicated.  The stumbling block is that according
to~(\ref{eq:theformulae}), $z=f_T(D;s)$ is given by
\begin{equation}
  z=f_T(D;s) = \frac{8D(2-sD)}{\left[(2-sD)^2 - D^2\right]^2},
\end{equation}
and solving for $D=\bar f_T(z;s)$ entails solving the quartic
\begin{sizeequation}{\small}
\label{eq:thequartic}
(1-s^2)^2z\,D^4 + 8s(1-s^2)z\,D^3 - 8\left[(1-3s^2)z-s\right]D^2 - 16(1+2sz)\,D + 16z=0.
\end{sizeequation}
The solution $D=D(z;s)$ satisfying $D=1\cdot z^1 +\nobreak O(z^2)$ can
be expressed in radicals by employing the quartic formula, but owing
to its complexity it is not given here.  If $s=\pm1$, the
quartic~(\ref{eq:thequartic}) degenerates to a more easily solved
quadratic, which can be viewed as the reason why the normal and
anti-normal orderings of each of the four operators, which are
supplied by Theorem~\ref{thm:identities34}, are relatively simple.
But for general~$s$, and even when $s=\nobreak0$ (the Weyl ordering
case), there seems to be no~simplification.

\smallskip
Whether it is possible to express $s$-ordered representations of
operators ${\rm e}^{\adag^L a \adag^R}$ with $e=L+\nobreak R-\nobreak
1>2$ in~terms of radicals, for general or even for specialized~$s$, is a
question requiring further study.



\end{document}